\setlist[enumerate]{nosep,topsep=0.3em}
\setlist[enumerate,1]{label=(\roman*)}
\setlist[enumerate,2]{label=(\alph*)}
\setlist[itemize]{nosep,topsep=0.1em}
\newcolumntype{M}[1]{>{\centering\arraybackslash}m{#1}}
\newcolumntype{L}[1]%
{>{\raggedright\let\newline\\\arraybackslash\hspace{0pt}}p{#1}}
\newcolumntype{C}[1]%
{>{\centering\let\newline\\\arraybackslash\hspace{0pt}}p{#1}}
\newcolumntype{R}[1]%
{>{\raggedleft\let\newline\\\arraybackslash\hspace{0pt}}p{#1}}
\newcolumntype{D}[1]%
{>{\raggedright\let\newline\\\arraybackslash\hspace{0pt}}m{#1}}
\newcolumntype{E}[1]%
{>{\centering\let\newline\\\arraybackslash\hspace{0pt}}m{#1}}
\newcolumntype{F}[1]%
{>{\raggedleft\let\newline\\\arraybackslash\hspace{0pt}}m{#1}}
\newtcolorbox{probBox}[1][]{
enhanced,
colback=black!5,
attach boxed title to top left={xshift=2mm,yshift=-2mm},
#1
}
\newcommand\nnote[1]{}
\newcommand\rz[1]{}
\newcommand\nnote[1]{{\color{blue}Neil: #1}}
\newcommand\rz[1]{{\color{blue}Rico: #1}}
\definecolor{darkblue}{rgb}{0,0,0.38}
\definecolor{darkred}{rgb}{0.6,0,0}
\definecolor{darkgreen}{rgb}{0.1,0.35,0}
\newcommand{\labeltarget}[1]{\Hy@raisedlink{\hypertarget{#1}{}}}
\newtheorem{theorem}{Theorem}
\newtheorem{lemma}[theorem]{Lemma}
\newtheorem{corollary}[theorem]{Corollary}
\newlength\bxheight
\newcommand\OPT{\mathit{OPT}}
\newcommand\LPthreeMat{\ensuremath{\mathrm{LP}_{3\text{-mat}}}}
\newcommand\LPkMat{\ensuremath{\mathrm{LP}_{k\text{-mat}}}\xspace}
\newcommand\eqLPkMat{(\ensuremath{\mathrm{LP}_{k\text{-mat}}})\xspace}
\newcommand\LPMatchoid{\ensuremath{\mathrm{LP}_{\text{mat}}}}
\newcommand\LPKnapsack{\ensuremath{\mathrm{LP}_{\text{matkn}}}}
\DeclareMathOperator{\supp}{supp}
\DeclareMathOperator{\poly}{poly}
\newcommand{\APX}{\textsl{APX}\xspace}
\newcommand{\Pcomp}{\textsl{P}\xspace}
\newcommand{\NPcomp}{\textsl{NP}\xspace}
\newcommand{\p}{\Pcomp}
\newcommand{\np}{\NPcomp} 
\newcommand{\nphard}{\np-hard\xspace}
\newcommand{\espace}{\,}
\newcommand{\R}{\ensuremath{\mathbb R}} 
\newcommand{\Z}{\ensuremath{\mathbb Z}}
\newcommand{\A}{\ensuremath{\mathcal{A}}}
\newcommand{\C}{\ensuremath{\mathcal{C}}} 
\newcommand{\I}{\ensuremath{\mathcal I}}
\newcommand{\M}{\ensuremath{\mathcal M}}
\newcommand{\sm}{\ensuremath{\setminus}} 
\newcommand{\es}{\ensuremath{\emptyset}}
\newcommand{\sse}{\subseteq} 
\newcommand{\assign}{\ensuremath{\leftarrow}}
\newcommand{\e}{\ensuremath{\epsilon}}
\newcommand{\al}{\ensuremath{\alpha}} 
\newcommand{\dt}{\ensuremath{\delta}} 
\newcommand{\Dt}{\ensuremath{\Delta}} 
\newcommand{\qedhere}{\tag*{\qed}}
\newcommand{\bg}[1]{\medskip\noindent{\it {#1}.\ }}
\newcommand{\ed}{{\hfill\qed}\medskip} 
\newenvironment{proofof}[1]{\bg{Proof of {#1}}}{\ed}
\newenvironment{proofsketchofnobox}[1]{\bg{Proof sketch of {#1}}}{\hfill\medskip}
\spnewtheorem*{3matprob}{Weighted 3-matroid intersection}{\bfseries}{\rmfamily}
\newcommand{\GMDBMST}{\textsc{gmdst}\xspace}
\newcommand{\gmdst}{\GMDBMST}
\newcommand{\base}{\mathcal{B}}
\title{Approximate Multi-Matroid Intersection via Iterative Refinement%
}
\author{%
\addtocounter{footnote}{2}
Andr{\'e} Linhares%
\thanks{Research supported by NSERC grant 327620-09 and an NSERC DAS Award.}% 
\fnmsep\inst{1} \and
Neil Olver%
\thanks{Supported by NWO VIDI grant 016.Vidi.189.087.}%
\fnmsep\inst{2} \and
\addtocounter{footnote}{-1}
Chaitanya Swamy$^{\text{\thefootnote}}$\fnmsep\inst{1} \and
\addtocounter{footnote}{1}
Rico Zenklusen%
\thanks{Supported by Swiss National Science Foundation grant 200021\_165866.}%
\fnmsep\inst{3}
}
\authorrunning{Linhares, Olver, Swamy, Zenklusen}
\institute{Dept. of Combinatorics and Optimization, Univ. Waterloo, Waterloo, Canada.
\email{\{alinhare,cswamy\}@uwaterloo.ca}
\and
Dept. of Econometrics \& Operations Research, Vrije Universiteit Amsterdam, Amsterdam, The Netherlands; CWI, Amsterdam, The Netherlands.
\email{n.olver@vu.nl}
%
%\and
%
%Department of Combinatorics and Optimization, University of Waterloo, Waterloo, Canada,\\
%\email{cswamy@uwaterloo.ca}
%
\and
Department of Mathematics, ETH Zurich, Zurich, Switzerland.
\email{ricoz@math.ethz.ch}
}
\author{%
Andr{\'e} Linhares\thanks{%
Department of Combinatorics and Optimization, University of Waterloo, Waterloo, Canada.
Email: \href{mailto:alinhare@uwaterloo.ca}%
{alinhare@uwaterloo.ca}.
}%
\and
Neil Olver\thanks{%
Department of Econometrics and Operations Research, Vrije Universiteit Amsterdam, Amsterdam, The Netherlands; and CWI, Amsterdam, The Netherlands.
Email: \href{mailto:n.olver@vu.nl}%
{n.olver@vu.nl}.
}%
\and
Chaitanya Swamy\thanks{%
Department of Combinatorics and Optimization, University of Waterloo, Waterloo, Canada.
Email: \href{mailto:cswamy@uwaterloo.ca}%
{cswamy@uwaterloo.ca}.
}%
\and 
Rico Zenklusen\thanks{%
Department of Mathematics, ETH Zurich, Zurich, Switzerland.
Email: \href{mailto:ricoz@math.ethz.ch}%
{ricoz@math.ethz.ch}.
Supported by Swiss National Science Foundation grant
200021\_165866.
%, ``New Approaches to Constrained Submodular Maximization''.
}%
}
\begin{document}

\maketitle

\begin{abstract}
We introduce a new iterative rounding technique to round a point in a matroid polytope
subject to further matroid constraints. This technique returns an independent set in one
matroid with limited violations of the other ones. On top of the classical steps of
iterative relaxation approaches, we iteratively refine/split involved matroid constraints
to obtain a more restrictive constraint system, that is amenable to iterative relaxation
techniques. Hence, throughout the iterations, we both tighten constraints and later relax
them by dropping constrains under certain conditions. 
Due to the refinement step, we can deal with considerably more general constraint classes
than existing iterative relaxation/rounding methods, which typically round on one matroid
polytope with additional simple cardinality constraints that do not overlap too much. 

We show how our rounding method, combined with an application of a matroid intersection
algorithm, yields the first $2$-approximation for finding a maximum-weight
common independent set in $3$ matroids. Moreover, our $2$-approximation is LP-based, and
settles the integrality gap for the natural relaxation of the problem. Prior to our work,
no better upper bound than $3$ was known for the integrality gap, which followed from the
greedy algorithm. 
%Our results extend several prior results for special cases, like weighted $3$-dimensional
%matchings, or the unweighted intersection of $3$ matroids. 
\nnote{Instead of the last sentence, what about:}
We also discuss various other applications of our techniques, including an extension that
allows us to handle a mixture of matroid and knapsack constraints.
\end{abstract}

%\thispagestyle{empty}
%\addtocounter{page}{-1}
%
%\newpage

\section{Introduction} \label{intro}

Matroids are among the most fundamental and well-studied structures in combinatorial optimization.
%\footnote{
Recall that a {\em matroid} $M$ is a pair $M=(N,\mathcal{I})$, where $N$ is a finite ground set and
$\mathcal{I}\subseteq 2^N$ is a family of sets, called \emph{independent sets}, such that 
\begin{enumerate*}
\item $\es\in\I$,
\item if $A\in \mathcal{I}$ and $B\subseteq A$, then $B\in \mathcal{I}$, and
\item if $A,B\in \mathcal{I}$ with $|A| > |B|$, then there is an element 
$e\in A\setminus B$ such that $B\cup \{e\}\in \mathcal{I}$. 
\end{enumerate*}
As is common when dealing with matroids, %in matroid optimization,
%Throughout this paper, 
we assume that a matroid is specified via an {\em independence oracle}, that, given $S\sse N$ as input, returns if $S\in\I$.
%which can be queried for any set $S\subseteq N$ and returns whether $S\in \mathcal{I}$.
%} 
Matroids capture a variety of interesting problems, and matroid-optimization algorithms provide a
powerful tool in the design and analysis of efficient algorithms. A key
matroid-optimization problem %which turns out to be rather versatile, 
is the {\em matroid intersection} problem, wherein  
%we are given two matroids over the same ground set and 
we seek a maximum-weight set that is independent in {\em two} matroids, for which various
efficient algorithms are known, and we also have a celebrated min-max theorem and a polyhedral 
understanding of the problem.
%This is largely due to the fact that a maximum weight set that is independent in two
%matroids can efficiently be found; a problem commonly known as \emph{matroid
%intersection}.  
The versatility of matroid intersection comes from the fact that the intersection of
matroids allows for describing a very broad family of constraints. 

Unfortunately, as soon as the intersection of $3$ or more matroids is considered, already the unweighted version of determining a maximum cardinality common independent set becomes \APX-hard. Due to its fundamental nature, and many natural special cases, the problem of optimizing over $3$ or more matroids has received considerable attention. % in the Approximation Algorithms community. 
In particular, there is extensive prior work ranging from the study of maximum cardinality
problems~\cite{lau_2011_iterative}, the maximization of submodular functions over the
intersection of multiple matroids~(see~\cite{fisher_1978_analysis,lee_2010_maximizing,%
lee_2010_submodular,gupta_2010_constrained,chekuri_2014_submodular} and the references therein), to various interesting special cases like $k$-dimensional matching~(see~\cite{hurkens_1989_size,halldorsson_1995_approximating,%
chan_2012_linear,cygan_2013_how,cygan_2013_improved} and the references therein; many of
these results apply also to the $k$-set packing problem which generalizes $k$-dimensional matching).

Nevertheless, there are still basic open questions regarding the approximability of the
optimization over $3$ or more matroids. 
%We are interested in one of the arguably most
%basic problem setting of this type, namely 
Perhaps the most basic problem of this type is the 
\emph{weighted $3$-matroid intersection} problem, %which is 
defined as follows. 

%\smallskip
\begin{3matprob} %{\bf Weighted $3$-matroid intersection.}\
Given matroids $M_i=(N,\I_i)$, for $i=1,2,3$, on a common ground set $N$, and a weight
vector $w\in\R^N$, solve
\begin{equation*}
\max\ \left\{ w(I):\ I\in \mathcal{I}_1 \cap \mathcal{I}_2 \cap \mathcal{I}_3 \right\},
\end{equation*}
where we use the shorthand $w(S)\coloneqq \sum_{e\in S} w(e)$ for any set $S\subseteq N$. 
\end{3matprob}

\begin{comment}
\smallskip
where $M_i = (N,\mathcal{I}_i)$ for $i\in [3]\coloneqq \{1,2,3\}$ are $3$ matroids on a
common ground set $N$, described by their independent sets
$\mathcal{I}_1,\mathcal{I}_2,\mathcal{I}_3\subseteq 2^N$, and $w\in \mathbb{R}^N$
is a given weight vector. %nonnegative weight function to maximize.
%\footnote{Requiring $w$ to be nonnegative is not restrictive because if $w(e) <0$ for some
%  $e\in N$, then this element can simply be deleted from the ground set and all matroids
%  without changing the problem. This works because the independent sets of a matroid are
%  down-closed.} 
Finally, for any set $S\subseteq N$, we use the shorthand $w(S)\coloneqq
\sum_{e\in S} w(e)$. 
\end{comment}
The \emph{unweighted $3$-matroid intersection} problem, which is also sometimes called
the {\em cardinality} version of 3-matroid intersection, 
%also called, \emph{unweighted $3$-matroid intersection problem}, 
is the special case where $w(e)=1$ for all $e\in N$, so $w(S)=|S|$ for $S\subseteq N$.   

%Interestingly, 
The $3$-matroid intersection problem 
%as well as its natural extension to $k$ matroids, called {\em $k$-matroid intersection},
has a natural %poorly understood 
and canonical LP-relaxation: %given by 
\begin{equation}\label{eq:3MatLP}
\max\ \left\{ w^T x:\ x \in P_{\mathcal{I}_1} \cap P_{\mathcal{I}_2} \cap P_{\mathcal{I}_3}\right\},\tag{\LPthreeMat}
\end{equation}
where, for a matroid $M=(N,\mathcal{I})$, we denote by $P_{\mathcal{I}}\subseteq [0,1]^N$ the matroid polytope of $M$, which is the convex hull of all characteristic vectors of sets in $\mathcal{I}$. It has a well known inequality description given by 
\begin{equation*}
P_{\mathcal{I}}  = \bigl\{ x\in \mathbb{R}_{\geq 0}^N:\ x(S)\leq r(S) \;\;\forall S\subseteq N \bigr\}\espace,
\end{equation*}
where $r:2^N \longrightarrow \mathbb{Z}_{\geq 0}$ is the \emph{rank function} of $M$,
which, for $S\subseteq N$, is defined by
%\begin{equation*}
$r(S) \coloneqq \max\bigl\{|I|:\ I\in \mathcal{I}, I\subseteq S\bigr\}$.
%\end{equation*}
The rank function is submodular, and $r(S)$ can be computed for any $S\sse N$ using an
independence oracle. It will therefore often be convenient to assume that a matroid $M$
is specified via its {\em rank oracle} that given $S\sse N$ as input, returns $r(S)$. In
particular, %it is known that 
one can efficiently optimize any linear function over $P_{\I}$ given a
rank oracle (or equivalently an independence oracle).
The above LP-relaxation extends naturally to the {\em $k$-matroid intersection} problem,
which is the extension of $3$-matroid intersection to $k$ matroids. 

Whereas \eqref{eq:3MatLP}, and its extension \eqLPkMat to $k$-matroid intersection, are
well-known LP-relaxations, there remain various gaps in our understanding of these
relaxations. 
It is widely known that the greedy algorithm is a $k$-approximation for $k$-matroid
intersection. %the natural extension of $3$-matroid intersection to the intersection of $k$
%matroids. 
Moreover, this approximation is relative to the optimal value of \eqLPkMat, 
%solution returned by the greedy algorithm can be compared against
%the optimal value of~\eqLPkMat, the generalization of \eqref{eq:3MatLP} to $k$ matroids,
which leads to the current-best upper bound of $k$ on the integrality gap of \eqLPkMat,
for all $k\geq 3$. 
However, the best lower bound on the integrality gap of~\eqLPkMat is $k-1$ (also for all
$k\geq 3$);  
this is known to be achievable in instances where the involved matroids are partition
matroids, and for unweighted instances~\cite{furedi_1981_maximum,lau_2011_iterative}. 
%This is the current extent of our understanding of these LP-relaxations for all 
%$k\geq 3$. 

Significant progress on approximating $k$-matroid intersection was achieved by Lee,
Sviridenko, and Vondr{\'a}k~\cite{lee_2010_submodular}, who presented, for any fixed
$\epsilon >0$, a local search procedure with running time exponential in $\epsilon$ that
leads to a ${k-1+\epsilon}$-approximation (i.e., the weight of the set returned is at
least (optimum)/$(k-1+\e)$). Unfortunately, apart from its high
running time dependence on $\epsilon$, this approach does not shed any insights on
\eqLPkMat, as the above guarantee is not relative to $\OPT_{\LPkMat}$.
%it is not LP-based. 
%
Further progress on understanding the quality of the LP-relaxations has only been achieved in
special cases. In particular, for {\em unweighted $k$-matroid intersection}, Lau, Ravi and
Singh~\cite{lau_2011_iterative} give an LP-based $(k-1)$-approximation through iterative
rounding. %was presented by Lau, Ravi, and Singh~\cite{lau_2011_iterative}. 
Their proof is based on identifying an element with ``large''
fractional value, %(i.e., $\frac{1}{k-1}$), 
picking it, and altering the fractional solution
so that it remains feasible; the last step crucially uses the fact that the instance
is unweighted to control the loss in the LP objective value. %due to the alteration.
For the intersection of $k$ partition matroids, a problem also known as
\emph{$k$-dimensional matching}, Chan and Lau~\cite{chan_2012_linear} were able to obtain
a $(k-1)$-approximation based on \eqLPkMat. 

Although it is generally believed that a $(k-1)$-approximation for $k$-matroid
intersection should exist, and that the integrality gap \eqLPkMat is equal to the known
lower bound of $k-1$, this has remained open even for $3$-matroid intersection 
(prior to our work). 
Recall that in this case, the best-known upper and lower bounds on the integrality gap
of \eqref{eq:3MatLP} are $3$ (via the classical greedy algorithm) and $2$ respectively.
%even for this case, the best upper bound on the integrality gap of~\eqref{eq:3MatLP} is
%$3$, obtained by the classical greedy algorithm, whereas the best known lower bound is
%$2$. 
Moreover, the only method to beat the trivial $3$-approximation of the greedy
algorithm is the non-LP based and computationally quite expensive
%$\sfrac{1}{(2+\epsilon)}$
$(2+\e)$-approximation in~\cite{lee_2010_submodular}. 
One main reason for the limited progress is the lack of techniques for rounding points in the
intersection of multiple matroid polytopes with sufficiently strong properties. In particular, one
technical difficulty that is encountered is that the tight constraints (even at an
extreme point) may have large overlap, and we do not know of ways for dealing with this.
%The goal of this paper is to introduce a new rounding technique that allows for addressing
%the above-mentioned questions for $3$-matroid intersection. 

\vspace{-0.5em}
\subsubsection*{Our results.}
We introduce a new iterative-rounding approach to handle the above difficulties, that
allows for dealing with a very general class of optimization problems involving
matroids. Before delving into the details of this 
technique, we highlight its main implication in the context of $3$-matroid intersection. 
\begin{theorem}\label{thm:3MatInt} \label{3mat}
%Weighted $3$-matroid intersection admits an
There is an LP-relative $2$-approximation for weighted $3$-matroid intersection.
%the maximum weight common independent set over the intersection of $3$ matroids. 
That is, for any instance, we can efficiently find a common independent set $R$ with
$w(R)\geq\sfrac{\OPT_{\text{\ref{eq:3MatLP}}}}{2}$; 
thus, the integrality gap of \eqref{eq:3MatLP} is at most $2$.
\end{theorem}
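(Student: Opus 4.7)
The plan is to solve the LP relaxation \eqref{eq:3MatLP}, obtain an optimal fractional solution $x^*$ of value $\OPT_{\text{\ref{eq:3MatLP}}}$, and then round it to an integral common independent set of weight at least $\OPT_{\text{\ref{eq:3MatLP}}}/2$. The key leverage for the factor $2$ is that weighted \emph{$2$-matroid intersection} is solvable exactly in polynomial time. So if we can produce a set $S\in\I_1$ whose weight is close to $\OPT_{\text{\ref{eq:3MatLP}}}$ and whose interaction with $M_2$ and $M_3$ is structurally controlled, we can apply a classical matroid-intersection subroutine to $M_2|_S$ and $M_3|_S$ to extract a common independent set $I\sse S$ of weight at least $w(S)/2$, which then lies in $\I_1\cap\I_2\cap\I_3$.

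First I would apply an iterative rounding/refinement scheme to $x^*$ against the constraint system $x\in P_{\I_1}\cap P_{\I_2}\cap P_{\I_3}$. At each step I would maintain a current fractional solution and a current list of matroid constraints, and perform the usual iterative-relaxation operations: fix elements $e$ with $x_e=1$ by contracting them in every matroid and add them to the partial output; delete elements with $x_e=0$; drop a matroid constraint whose contribution can be charged against the remaining matroids. When none of these apply, I would use the new \emph{refinement} step highlighted in the abstract: split a tight rank constraint from one of the matroids into a finer system of matroid constraints, producing a more structured (but at least as restrictive) polytope in which standard iterative-rounding arguments can again make progress. The invariant throughout is that $M_1$ is never dropped and that the current fractional LP value is at least $\OPT_{\text{\ref{eq:3MatLP}}}$.

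Upon termination, I would have an integral set $S\in\I_1$ of weight $w(S)\geq \OPT_{\text{\ref{eq:3MatLP}}}$, with $S$ structured so that the natural LP over $P_{\I_2|_S}\cap P_{\I_3|_S}$ has optimum at least $w(S)/2$ (morally, $S$ decomposes into two common independent sets in $\I_2\cap\I_3$, so the refinement step should be designed to force exactly such a decomposition). Invoking a weighted matroid-intersection algorithm on $M_2|_S$ and $M_3|_S$ then returns $I\sse S$ with $w(I)\geq w(S)/2 \geq \OPT_{\text{\ref{eq:3MatLP}}}/2$, which is both the desired approximation and the claimed integrality-gap bound.

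The main obstacle is the design and analysis of the refinement step. Classical iterative rounding for multi-matroid intersection fails precisely because, at an extreme point of $P_{\I_1}\cap P_{\I_2}\cap P_{\I_3}$, tight rank constraints from the three matroids can overlap arbitrarily, and no single element need have large fractional value nor can any matroid be dropped cheaply. One must show that whenever the usual fix/delete/drop rules stall, some matroid admits a productive split into two matroid constraints whose intersection retains $x^*$, and that after a polynomial number of such refinements the system becomes amenable to a standard drop step with at most a factor-$2$ loss. A secondary obstacle is ensuring that the refined constraints remain genuine matroid rank constraints (so that a rank oracle is available at every step) and that the overall procedure runs in polynomial time.
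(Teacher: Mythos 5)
Your proposal is essentially the paper's approach: solve the LP, invoke the iterative-refinement rounding theorem (Corollary~\ref{cor:mainInd} with $q_1=q_2=2$) to get a set $A$ independent in one of the matroids with $w(A)\geq\OPT_{\text{\ref{eq:3MatLP}}}$ and $\chi^A\in 2P_{\I_2}\cap 2P_{\I_3}$, and then run weighted $2$-matroid intersection on $M_2|_A,M_3|_A$, using integrality of the $2$-matroid intersection polytope applied to $\chi^A/2$ to extract an integral common independent set of weight at least $w(A)/2$. One caveat on your parenthetical aside: $2$-independence of $A$ in $M_2$ and in $M_3$ \emph{separately} does not imply that $A$ partitions into two sets each simultaneously independent in both matroids; the actual (weaker, and sufficient) fact used is just that $\chi^A/2\in P_{\I_2}\cap P_{\I_3}$, whose integrality is a classical result and not something the refinement step needs to engineer.
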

This is the {\em first} $2$-approximation for $3$-matroid intersection (with general weights).  
Moreover, our result {\em settles} the integrality gap of~\eqref{eq:3MatLP} due to the known
matching integrality-gap lower bound of $2$. 

The chief new technical ingredient that leads to Theorem~\ref{3mat}, and results for
other applications discussed in Section~\ref{apps}, is an approximation result
based on a novel \emph{iterative refinement} technique (see Section~\ref{refine})
%We obtain Theorem~\ref{thm:3MatInt}, and further implications to be discussed later, by
%introducing an iterative rounding technique with a new ingredient, to solve 
for problems of the following type. 
%
%Let %$k\geq 1$ be a given integer, 
%$N$ be a finite ground set,
Let $N=N_0$ be a finite ground set, and $M_i=(N_i,\mathcal{I}_i)$ for $i=0,\ldots,k$ be $k+1$ matroids with rank functions $\{r_i\}$, where $N_i\subseteq N$ and $w\in\R^N$ be a weight
vector (note that {\em negative} weights are allowed). 
%together with linear weights $w\in \mathbb{R}^N$, we
We consider the problem  
%The nominal problem we consider is
%\vspace{-0.1em}
\begin{equation}\label{eq:baseQMatchoid}
    %\max\ \bigl\{w(I): \ I\in \mathcal{I}_i \ \ \forall i=0,\ldots,k, \quad |I\cap N_0|\geq L\bigr\}\;.
    \max\ \bigl\{w(I): \ I \in \base_0, \quad I \cap N_i \in \mathcal{I}_i \ \ \forall i \in [k] \bigr\},
%\tag{mmat}
%\vspace{-0.1em}
\end{equation}
where $\base_0$ is the set of all bases of $M_0$ and $[k] := \{1, \dots, k\}$.
%
%We call the constraints $I\in\I_0$, $|I\cap N_0|\geq L$, the $M_0$-constraints.
%(Note that the above problem can model the problem of finding a base of $M_0$ that is independent in the other matroids.)
The reason we consider matroids $M_i$ for $i\in [k]$ defined on ground sets $N_i$ that are
subsets of $N$, is because, as we show below, we obtain guarantees depending on how
strongly the sets $N_i$ overlap; intuitively, problem \eqref{eq:baseQMatchoid} becomes
easier as the overlap between $N_1,\ldots,N_k$ decreases, and our guarantee improves
correspondingly. 

We cannot hope to solve \eqref{eq:baseQMatchoid} optimally,
as this would enable one to solve the \nphard $k$-matroid intersection problem. 
%We will be interested in finding a basis of $M_0$ of large value 
Our goal will be to  find a basis of $M_0$ of large weight 
%(relative to $\OPT_{\ref{eq:baseQMatchoid}}$) 
that is ``approximately independent'' in the matroids
$M_1,\ldots,M_k$. 
%Before stating our result for \eqref{eq:baseQMatchoid}, 
%In order to state this precisely, we need to define ``approximate independence.''

How to quantify ``approximate independence''?
%One can think of two ways of quantifying ``approximate independence'' 
Perhaps the two notions that first come to mind are additive 
and multiplicative violation of the rank constraints. 
%we can say that $S\sse N$ violates
%the rank constraints by an additive $\al$-factor if $|S|\leq r(S)+\al$, and by a
%multiplicative $\al$-factor if $|S|\leq\al\cdot r(S)$. 
Whereas additive violations are common in the study of degree-bounded MST problems, which
can be cast as special cases of \eqref{eq:baseQMatchoid}, it turns out that such a guarantee is
impossible to obtain (in polytime) for \eqref{eq:baseQMatchoid}. 
%as this would imply that one can solve \eqref{eq:baseQMatchoid} exactly.
More precisely, we show in Appendix~\ref{addviol} (via a replication idea) that, even for
$k=2$, if we could find in polytime a basis $B$ of $M_0$ 
%with $|I\cap N_0|\geq L$, and 
satisfying $|B|\leq r_i(B)+\al$ for $i=1,2$ for 
%$\al$ sublinear in $|N|$, 
$\al = O(|N|^{1-\epsilon})$ for any $\epsilon > 0$,
then we could efficiently find a basis of $M_0$ that is independent in $M_1$, $M_2$; the latter problem is easily seen to be \nphard via a reduction
from Hamiltonian path. 
\nnote{I don't think sublinear was precisely correct here.}
We therefore consider multiplicative violation of the rank constraints. We say that
$S\sse N$ is {\em $\al$-approximately independent}, or simply {\em $\al$-independent}, for 
a matroid $M=(N,\I)$, if  
$|T|\leq \al\cdot r(T)\ \forall T\sse S$ (equivalently, $\chi^S\in\al P_{\I}$, where
$\chi^S$ is the characteristic vector of $S$). 
This is much stronger than simply requiring that $|S|\leq\al\cdot r(S)$, and
it is easy to give examples where this weaker notion admits sets that one would consider
to be quite far from being independent. 
%\footnote{For instance, suppose $M$ is a partition matroid, with one part
%consisting of $|N|/2$ elements and a bound of $1$, and another part consisting of the
%remaining $|N|/2$ elements and a bound of $|N|/2$. Then, the set $N$ is $2$-independent
%under the weaker notion, but under our stronger notion of $2$-independence, we may 
%pick at most $2$ elements from the first part, which gels better with being ``almost
%independent''.} 
An appealing feature of the stronger definition is that, using the
min-max result for matroid-intersection (or via matroid partition; see, e.g.,
\cite{CookCPS}), it follows easily that if $\al\in\Z_{\geq 0}$, then $S$ is
$\al$-independent iff $S$ can be partitioned into at most $\al$ independent sets of $M$. 
We now state the guarantee we obtain for \eqref{eq:baseQMatchoid} precisely.
%We are interested in finding a basis $I$ of $M_0$ of value no smaller than the optimal
%value of~\eqref{eq:baseQMatchoid} and such that, for each $i\in [k]$, the set $I\cap N_i$
%can be partitioned into few independent sets of $M_i$. To this end, 
\nnote{I wonder if we should change the label. We don't use matchoid anywhere in the text at all, so it's a bit weird for people who aren't familiar with the term. Plus it's long.}
We consider the following canonical LP-relaxation of~\eqref{eq:baseQMatchoid}:
%
%\vspace{-0.1em}
\begin{equation}\label{eq:mainLP}
\max\ \left\{w^Tx: \ \  %x\vert_{N_0}\in P_{\B_0}, \quad
x \in \R_{\geq 0}^N, \quad x \in P_{\base_0}, \quad x\vert_{N_i}\in P_{\I_i} \ \ \forall i \in [k] \right\},
%\max\ \Bigl\{w^Tx: \ \ %x\vert_{N_0}\in P_{\B_0}, \quad
%x\vert_{N_i}\in P_{\I_i}\ \ \forall i=0,\ldots,k, \quad x(N_0)\geq L, \quad x\in\R^N \Bigr\},
%\begin{array}{rrcll}
%\max & w^T x & & &\\
%     &    x\vert_{N_0} &\in &P_{\mathcal{B}_0} & \\
%     &    x\vert_{N_i} &\in &P_{\mathcal{I}_i} & \forall i\in [k]\\
%     &    x            &\in &\mathbb{R}^N\enspace,
%\end{array}
\tag{\LPMatchoid}
%\vspace{-0.2em}
\end{equation}
%
%where $P_{\mathcal{B}_0}\coloneqq \{x\in P_{\I_0}: x(N_0) = r_0(N_0)\}$ is the base
%polytope of $M_0$ (with $r_0$ being the rank function of $M_0$), and, 
where for a set
$S\subseteq N$, we use $x\vert_{S}\in \mathbb{R}^S$ to denote the restriction of $x$ to
$S$.  
For ease of notation, we will sometimes write $x\in P_{\mathcal{I}_i}$ and $R\in\I_i$ instead of $x\vert_{N_i} \in P_{\mathcal{I}_i}$ and $R\cap N_i\in\I_i$, respectively.
Our main result 
for~\eqref{eq:baseQMatchoid}, based on a new iterative rounding algorithm
for~\eqref{eq:mainLP} described in Section~\ref{sec:mainRounding}, 
is the following.  
\begin{theorem}\label{thm:mainThm} \label{thm:mainthm}
Let $q_1,\ldots, q_k \in \mathbb{Z}_{\geq 1}$ such that 
%\vspace{-0.3em}
\begin{equation}\label{eq:loadProp}
\sum_{i\in [k]: e\in N_i} q_i^{-1} \leq 1 \qquad \forall e\in N\espace.
%\vspace{-0.3em}
\end{equation}
If \eqref{eq:mainLP} is feasible, then one can efficiently compute $R \subseteq N$ such that
\begin{enumerate}
%\item $|R\cap N_0|\geq L$;
\item $R\in\base_0$;
\item $w(R)\geq\OPT_{\text{\ref{eq:mainLP}}}$; and %is no more than the optimal value of~\eqref{eq:mainLP}; 
%\item $R \in \base_0$; and
\item $R$ is $q_i$-independent in $M_i$\ $\forall i \in [k]$.
    %$\chi^R\in q_iP_{\I_i}$, or equivalently, $R\cap N_i$ can be partitioned into $q_i$
  %independent sets of $M_i$, for all $i\in[k]$. 
\end{enumerate}
%
%\noindent 
%In particular, taking $q_i=\Dt\coloneqq\max_{e\in N}\bigl|\{i\in[k]: e\in N_i\}\bigr|$ for
%all $i\in[k]$, we have that $R$ is an independent set of $M_0$ of the required size, and
%$R$ is $\Dt$-independent in $M_i$ for all $i\in[k]$. 
\end{theorem}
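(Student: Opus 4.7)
The plan is to prove Theorem~\ref{thm:mainthm} by an iterative rounding algorithm applied to extreme-point solutions of~\eqref{eq:mainLP}, combined with an \emph{iterative refinement} step that replaces each secondary matroid constraint $x\vert_{N_i}\in P_{\I_i}$ by strictly more restrictive constraints while still admitting the current fractional solution. The algorithm maintains a partial solution $R$, together with a residual $M_0$ (after contractions/deletions) and a modified family of ``refined'' secondary matroids, with the invariant that any integral completion of $R$ using a basis of the residual $M_0$ that is independent in each refined $M_i$ yields a basis of the original $M_0$ that is $q_i$-independent in the original $M_i$ for every $i\in [k]$.

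At each iteration, compute an optimal extreme point $x^*$ of the current LP. First apply the usual reductions: if $x^*_e=0$ delete $e$, and if $x^*_e=1$ place $e$ in $R$ and contract it in every matroid containing it. Both steps preserve the LP value and the invariant, so we may assume $x^*\in (0,1)^N$. The extreme point is then determined by tight rank constraints, and by submodularity and a standard uncrossing argument, the tight sets arising from each matroid $M_i$ can be taken to form a chain $\mathcal{C}_i=\{T^i_1\subsetneq\cdots\subsetneq T^i_{m_i}\}\subseteq 2^{N_i}$.

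The refinement step uses these chains: for each $i\geq 1$, I would replace $M_i$ by the direct sum of matroids on the intervals $T^i_j\setminus T^i_{j-1}$ obtained by successive restriction and contraction of $M_i$ along $\mathcal{C}_i$. The point $x^*$ remains feasible because each tight identity $x^*(T^i_j)=r_i(T^i_j)$ forces $x^*(T^i_j\setminus T^i_{j-1})=r_i(T^i_j)-r_i(T^i_{j-1})$, so $x^*$ lies in the polytope of every refined piece; yet the constraints are strictly strengthened. Crucially, because the refined pieces encode exactly the partition by consecutive tight sets, a set that meets each refined piece in an independent set is $q_i$-independent in the original $M_i$, provided the number of elements chosen per piece is accounted for appropriately. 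The relaxation step then drops a matroid whenever the load condition~\eqref{eq:loadProp} permits it, via a token argument: each $e\in\supp(x^*)$ contributes one token, which it distributes as $q_i^{-1}$ to every $M_i$ with $e\in N_i$ (summing to at most $1$ by~\eqref{eq:loadProp}) and the leftover to $M_0$. Under this scheme, I would show that at any fractional extreme point of the refined LP, some $M_i$ with $i\geq 1$ collects strictly fewer tokens than it contributes tight constraints, so the corresponding matroid can be dropped without losing the $q_i$-independence invariant.

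The main obstacle is precisely this token-counting lemma: that~\eqref{eq:loadProp} forces at least one $M_i$ to become ``droppable'' at every fractional extreme point of the refined LP. The refinement is essential here, since without it the tight sets of different $M_i$ may overlap so badly that no standard token count works, which is exactly the phenomenon that has historically obstructed an LP-based $2$-approximation for $3$-matroid intersection. Termination should follow because each iteration either fixes an element, drops a matroid, or strictly lengthens some chain in the refinement, and the latter can occur only polynomially often since the refined matroids live on subsets of the original $N$.
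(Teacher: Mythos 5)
Your overall plan matches the paper's: compute a basic optimal solution, delete/contract the integral coordinates, refine the secondary matroids along the (uncrossed) chain of their tight sets so that afterwards each refined piece has at most one tight constraint (its full ground set), and then drop some secondary matroid using a fractional token-counting argument that exploits~\eqref{eq:loadProp}. Your observations that refinement preserves feasibility and that independence in the refined pieces implies $q_i$-independence in the original $M_i$ are exactly the two auxiliary lemmas the paper uses. So the high-level architecture is right.

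However, the specific token scheme you propose does not support the argument, and this is the crux of the proof. You have each $e\in\supp(x^*)$ give $q_i^{-1}$ to every $M_i$ containing $e$ and the leftover $1-\sum_{i:e\in N_i}q_i^{-1}$ to $M_0$; a secondary matroid $M'=(N',\I')$ then collects $|N'|/q_{M'}$ tokens, and ``fewer tokens than tight constraints'' would mean $|N'|<q_{M'}$. That is far too strong a condition to hope for, and it is also not the right condition to drop on: the drop rule needed is $|N'|-x^*(N')<q_{M'}$, because then $N'$ decomposes into a basis of $M'$ of size $r'(N')=x^*(N')$ plus at most $q_{M'}-1$ singletons (each independent since $x^*_e>0$), so $N'$—and hence any subset of it—remains $q_{M'}$-independent after dropping $M'$. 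To make the counting work, the correct scheme is for $e$ to give $x^*(e)$ tokens to the row of the smallest tight $M_0$-set containing it and $(1-x^*(e))/q_{M'}$ tokens to each row $A_{M'}$; then each $e$ supplies at most $1$ by~\eqref{eq:loadProp}, a row $A_{M'}$ receives exactly $(|N'|-x^*(N'))/q_{M'}$ (so $<1$ token is precisely the drop condition), and each $M_0$-row receives $x^*(S_\ell)-x^*(S_{\ell-1})\geq 1$ because it is a \emph{positive integer}—this integrality step is essential and your scheme has no analogue of it. Finally, if every element gave exactly one token and every $M'$-row received at least one, one shows $\sum_{M'\in\M} q_{M'}^{-1}\chi^{N'}=\chi^N$ equals the $M_0$-row for $S_p=N$, contradicting linear independence. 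Without the $x^*$-weighted allocation and the integrality of the $M_0$-row increments, your counting does not reach a contradiction.

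A smaller omission: you assert the dropped matroid ``can be dropped without losing the $q_i$-independence invariant,'' but this requires the explicit partition argument above (basis plus at most $q_{M'}-1$ singletons), which in turn requires the drop condition to be stated as $|N'|-x^*(N')<q_{M'}$; your proposal never pins this down.
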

Note that, in particular, taking $q_i=\max_{e\in N}\bigl|\{j\in[k]: e\in N_j\}\bigr|$ for
all $i \in [k]$ satisfies \eqref{eq:loadProp}. 
%
%Notice that the third property can equivalently be phrased as $\chi^R \in q_i\cdot
%P_{M_i}$ for $i\in [k]$, where $\chi^R\in \{0,1\}^N$ denotes the characteristic vector of
%$R$. This holds because matroid polytopes---and, more generally, integer
%polymatroids---have the integer decomposition property 
%(see~\cite[Corollary 46.2c]{schrijver_2003_combinatorial}).  
%
%In words, 
Thus, %Theorem~\ref{thm:mainThm} yields a basis of $M_0$ that 
we violate the constraints imposed by the other matroids
$M_1,\ldots, M_k$ by a multiplicative factor depending on how strongly the $N_i$s
overlap. 

While we have stated Theorem~\ref{thm:mainThm} in terms of \emph{bases} of $M_0$, the following natural variant is easily deduced from it (we defer the proof to the Appendix~\ref{append-proofs}).

\begin{corollary}\label{cor:mainInd}
Theorem~\ref{thm:mainThm} also holds when $R$ is required only to be an independent set in $M_0$ 
(as opposed to a basis), %and we replace $P_{\base_0}$ in \eqref{eq:mainLP} by $P_{\I_0}$.  
%rather than a basis, 
and \eqref{eq:mainLP} is replaced by 
\begin{equation}\label{eq:indLP}
\max\ \bigl\{w^Tx: \ \  %x\vert_{N_0}\in P_{\B_0}, \quad
x \in \R_{\ge 0}^N, \quad x\vert_{N_i}\in P_{\I_i} \ \ \forall i= 0,1,\ldots,k \bigr\}\espace.
\end{equation}
    %emodified by replacing the base polytope $P_{\base_0}$ with the matroid polytope $P_{\I_0}$.
\end{corollary}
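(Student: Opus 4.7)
The plan is to reduce the statement to Theorem~\ref{thm:mainThm} by padding $M_0$ with dummy elements so that independent sets of $M_0$ correspond to bases of an enlarged matroid. Set $k^* := r_0(N)$, let $D$ be a disjoint set of $k^*$ fresh elements, and let $M_0'$ be the matroid on $N' := N \cup D$ whose independent sets are those $S \subseteq N'$ with $S \cap N \in \I_0$ and $|S| \leq k^*$; equivalently, the truncation to rank $k^*$ of the direct sum of $M_0$ with the free matroid on $D$. Its rank function is $r_0'(S) = \min\{k^*,\, r_0(S \cap N) + |S \cap D|\}$, and every basis of $M_0'$ has the form $I \cup D'$ for some $I \in \I_0$ and $D' \subseteq D$ with $|I| + |D'| = k^*$. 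Extend $w$ by $w(d) = 0$ for $d \in D$; since each $N_i \subseteq N \subseteq N'$, the matroids $M_1, \ldots, M_k$ and the integers $q_1, \ldots, q_k$ carry over unchanged and \eqref{eq:loadProp} still holds.

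Next I would verify that the optimum of \eqref{eq:mainLP} applied to the padded instance is at least $\OPT_{\text{\ref{eq:indLP}}}$. Given any $x$ feasible for \eqref{eq:indLP}, define $x' \in \R^{N'}$ by $x'\vert_N = x$ and $x'(d) = (k^* - x(N))/k^*$ for $d \in D$. Since $x \in P_{\I_0}$ forces $0 \leq x(N) \leq k^*$, these values lie in $[0,1]$; moreover $x'(N') = k^*$ and $w^T x' = w^T x$. The one substantive check is $x'(S) \leq r_0'(S)$ for all $S \subseteq N'$, which splits into two short cases: if $r_0(S \cap N) + |S \cap D| \geq k^*$ one bounds $x'(S)$ by $k^*$ using $x(S \cap N) \leq x(N) \leq k^*$ and $|S \cap D| \leq k^*$; otherwise one uses $x(S \cap N) \leq r_0(S \cap N)$ to compare directly with $r_0(S \cap N) + |S \cap D|$. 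This shows $x' \in P_{\base_0'}$, and since $x'\vert_{N_i} = x\vert_{N_i} \in P_{\I_i}$ for every $i \in [k]$, the extension $x'$ is feasible for \eqref{eq:mainLP} on the padded instance with the same objective.

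Applying Theorem~\ref{thm:mainThm} to the padded instance then produces a set $R' \in \base_0'$ with $w(R') \geq \OPT_{\text{\ref{eq:indLP}}}$ that is $q_i$-independent in $M_i$ for every $i \in [k]$. Setting $R := R' \cap N$, the properties $R \in \I_0$ (by construction of $M_0'$), $w(R) = w(R')$ (since $w$ vanishes on $D$), and $R \cap N_i = R' \cap N_i$ (since $N_i \subseteq N$, so $q_i$-independence is inherited from $R'$) all follow directly, giving the corollary. The only delicate point is choosing the dummy values on $D$ so that no rank constraint of $M_0'$ is violated by $x'$; spreading the slack $k^* - x(N)$ uniformly over $D$ succeeds precisely because we set $|D| = k^*$, and the remainder of the argument is routine bookkeeping.
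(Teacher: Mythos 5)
Your proposal is correct and takes essentially the same approach as the paper: pad $M_0$ with $r_0(N)$ dummy zero-weight elements, replace it by the rank-$r_0(N)$ truncation of the direct sum with a free matroid, and apply Theorem~\ref{thm:mainThm} before projecting back. The paper states the LP feasibility transfer as an ``easy to see'' claim, whereas you verify it explicitly via the uniform slack distribution $x'(d) = (k^* - x(N))/k^*$; this is a valid filling-in of the same argument.
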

A variety of problem settings can be handled via Theorem~\ref{thm:mainThm} and Corollary~\ref{cor:mainInd}
in a unified way. We first show how to obtain a crisp, simple proof of
Theorem~\ref{thm:3MatInt}.

\begin{proofof}{Theorem~\ref{thm:3MatInt}}
%\todo[inline]{TO BE COMPLETED BY ANDR{\'E}.}
Given matroids $M_i = (N,\mathcal{I}_i)$ for $i=0,1,2$, and a weight vector $w \in
\mathbb{R}^N$, we first solve~\eqref{eq:3MatLP} to obtain an optimal
solution $x^*$. Now we utilize Corollary~\ref{cor:mainInd} with the same three matroids, and $q_1=q_2=2$. 
Clearly, these $q$-values satisfy
\eqref{eq:loadProp}, and $x^*$ is a feasible solution to \eqref{eq:indLP}.
%\eqref{eq:indLP}. 
Thus we obtain a set $A\in\I_0$ with 
$w(A)\geq w^Tx^*$ and %=\OPT_{\text{\ref{eq:3MatLP}}}$ such that 
$\chi^A\in 2P_{\I_1}\cap 2P_{\I_2}$.

It is well known that $P_{\I_1}\cap P_{\I_2}$ is a polytope with integral extreme
points (see, e.g.,~\cite{CookCPS}). So since $\chi^A/2\in P_{\I_1}\cap P_{\I_2}$, by using
an algorithm for (weighted) matroid intersection applied to matroids $M_1$ and $M_2$
restricted to $A$, we can find a set $R\sse A$ such that $R\in\I_1\cap\I_2$ and 
%$w(R)\geq w^T\chi^A/2\geq\OPT_{\text{\ref{eq:3MatLP}}}/2$. Finally, since $R\sse A$ and
$w(R)\geq w^T\chi^A/2\geq w^T x^*/2$. Finally, since $R\sse A$ and
$A\in\I_0$, we also have that $R\in\I_0$.
\end{proofof}

Beyond 3-matroid intersection, Theorem~\ref{thm:mainthm} is applicable to various 
constrained (e.g., degree-bounded) spanning tree problems; we expand on this below. %in Section~\ref{relwork}. 
In Section~\ref{apps}, we discuss an application in this
direction, wherein we seek a min-cost spanning tree 
satisfying matroid-independence constraints on the edge-sets of a given disjoint
collection of node sets. Using Theorem~\ref{thm:mainthm}, we obtain a spanning tree with a
multiplicative factor-2 violation of the matroid constraints.
%This generalizes the matroidal degree-bounded spanning tree
%problem considered by~\cite{Zenklusen} 
%(which itself generalizes degree-bounded spanning
%tree). 
%As noted earlier, in contrast to these constrained spanning-tree problems, the
%general problem \eqref{eq:baseQMatchoid} does not admit small additive violations (see
%Appendix~\ref{addviol}); this hardness carries over to the spanning-tree application
%mentioned above. 

\begin{comment}
Contrary to many constrained spanning tree problems that can be captured by
Theorem~\ref{thm:mainThm}, we can show that the setting considered by
Theorem~\ref{thm:mainThm} does not allow for finding solutions with small additive
violations. This justifies the type of violations we consider in
Theorem~\ref{thm:mainThm}. We expand on this in Appendix~\ref{sec:largeAdditiveViolation}.  
Before we briefly expand on this in 
Section~\ref{subsec:priorRes}, we show how we derive Theorem~\ref{thm:3MatInt} from
Theorem~\ref{thm:mainThm}. 
\end{comment}

In Section~\ref{apps}, we also present a noteworthy extension of
Theorem~\ref{thm:mainthm} with $t$ \emph{knapsack constraints} in addition to $k$ matroid
constraints, and show that we can %We are able to 
obtain multiplicative violations of both the matroid and knapsack constraints. %as well.
%$\sum_{e\in R}c^i_e\leq q_iU^i$ for all $i=k+1,\ldots,k+t$.
The only other such result we are aware of that applies to a \emph{mixture} of matroid
and knapsack constraints is by Gupta et al.~\cite{GuptaNR}; their result 
in our setting yields an $O(kt)$-approximation with no constraint violation, which
is incomparable to our result.

\begin{comment}
In Section~\ref{apps}, we also present a noteworthy extension of
Theorem~\ref{thm:mainthm} that allows one to handle both matroid-independence and  
\emph{knapsack constraints}. Suppose in addition to the $k+1$ matroids
$M_0,M_1,\ldots,M_k$, we have $t$ knapsack constraints, imposing that the set $R$
returned should satisfy $\sum_{e\in R}c^i_e\leq U_i$ for $i=k+1,\ldots,k+t$. We prove 
%in Section~\ref{apps} 
that given $q_i$'s for $i\in[k+t]$ satisfying an analogue of \eqref{eq:loadProp},
we can efficiently find a basis of $M_0$ of weight at least the LP-optimum that 
%satisfies the $M_0$-constraints, and 
violates the $i$-th matroid or knapsack
constraint by (roughly) a $q_i$-factor, for all $i\in[k+t]$ (see Theorem~\ref{thm:knapThm}). 
%$q_i$-independent in $M_i$ for all $i\in[k]$, and satisfies 
%$\sum_{e\in R}c^i_e\leq q_iU^i$ for all $i=k+1,\ldots,k+t$.
The only other result we are aware of that applies to a {\em mixture} of matroid
and knapsack constraints is due to Gupta et al.~\cite{GuptaNR}. Specializing their result 
to our setting yields an $O(kt)$-approximation with no constraint violation, which
is incomparable to our result.
\end{comment}

\vspace{-1.0em}

\subsubsection*{Related work and connections.}
%\subsection{Related work and connections}\label{subsec:priorRes} \label{relwork}
We note that by choosing $M_0$ to be the graphic matroid,
problem \eqref{eq:baseQMatchoid} generalizes a variety of known
constrained spanning tree problems. This includes 
degree-bounded spanning trees, and generalizations thereof considered by Bansal et al.~\cite{bansal_2009_additive}, Kir\'aly et al.~\cite{kiraly_2008_degree}, and Zenklusen~\cite{zenklusen_2012_matroidal}. 
Theorem~\ref{thm:mainThm} thus yields a unified way to deal with various spanning tree problems considered in the literature, where
the soft/degree constraints are violated by at most a constant factor. 
However, as noted earlier, whereas the above works obtain stronger, additive-violation
results, for the various constrained spanning tree problems they consider, such guarantees
are not possible for our general problem \eqref{eq:baseQMatchoid} %small additive violations 
(see Appendix~\ref{addviol}). This hardness (of obtaining small additive violations) carries
over to the spanning tree application that we consider in Section~\ref{apps} (which
generalizes the matroidal degree-bounded spanning tree problem considered
by~\cite{zenklusen_2012_matroidal}).   
%We recall, that, as we discuss in 
%Appendix~\ref{sec:largeAdditiveViolation}, this is not possible in our general setting. 

To showcase how Theorem~\ref{thm:mainThm} can be used for such problems, consider the
minimum degree-bounded spanning tree problem, where given is a graph $G=(V,E)$ with edge
weights $w:E\rightarrow \mathbb{R}$ and degree bounds $B_v\in \mathbb{Z}_{\geq 1}$ for
$v\in V$. The nominal problem asks to find a spanning tree $T\subseteq E$ with $|T\cap
\delta(v)|\leq B_v$ for $v\in V$ minimizing $w(T)$, where $\delta(v)$ denotes the set of edges incident with $v$. Here one can apply
Theorem~\ref{thm:mainThm} with $M_0$ being the graphic matroid of $G$, and for each
$v\in V$ we define a uniform matroid $M_v$ with ground set $\delta(v)$ and rank
$B_v$. Theorem~\ref{thm:mainThm} with $q_v=2 \;\forall v\in V$ and negated edge weights leads to a spanning tree $T$ with $|T\cap \delta(v)|\leq 2B_v \; \forall v\in V$ and weight no more than the optimal LP-weight. Whereas this is a simple showcase example, Theorem~\ref{thm:mainThm} can be used in a
similar way for considerably more general constraints than just degree constraints. 

Finally, we highlight a main difference of our approach compared to prior
techniques. Prior techniques for related problems, as used for example by Singh and
Lau~\cite{singh_2007_approximating}, Kir{\'a}ly et al.~\cite{kiraly_2008_degree}, %Lau, and Singh
and Bansal et al.~\cite{bansal_2009_additive}, %Khandekar, and Nagarajan, 
successively drop constraints of a relaxation.  
Also, interesting variations have been suggested that do not just drop constraints but may
relax constraints by replacing a constraint by a weaker family (see work by Bansal et
al.~\cite{bansal_2010_generalizations}). 
%Khandekar, K\"onemann, Nagarajan, and Peis~\cite{bansal_2010_generalizations}). 
%
In contrast, %with these approaches, 
our method does not just relax constraints, but also
strengthens the constraint family in some iterations, so as to simplify it and enable one
to drop constraints later on. 

\begin{comment}
\subsection{Organization of this paper}

In Section~\ref{sec:mainRounding} we present our new iterative rounding algorithm and
prove our main rounding result, Theorem~\ref{thm:mainThm}. In
Appendix~\ref{sec:largeAdditiveViolation} we show why the nominal
problem~\eqref{eq:baseQMatchoid} that we consider does not algorithms returning a solution
with small additive violations of the matroid constraints unless $\Pcomp = \NPcomp$. 
\end{comment}
	
\section{Our rounding technique }\label{sec:mainRounding} \label{round} 
\label{refine}

Our rounding technique heavily relies on a simple yet very useful ``splitting'' procedure
for matroids, which we call \emph{matroid refinement}.

%\subsection{Matroid refinement} \label{refine}
\subsubsection*{Matroid refinement.}

Let $M=(N,\mathcal{I})$ be a matroid with rank function $r:2^N\rightarrow \mathbb{Z}_{\geq 0}$, and let $S\subsetneq N$, $S\neq \emptyset$. The
\emph{refinement} of $M$ with respect to $S$ are the two matroids $M_1=M\vert_{S}$
obtained by restricting $M$ to $S$, and $M_2=M/S$ obtained by contracting $S$ in $M$. 
%As is common, we write $M_1 = M\vert_S$ for the restriction of $M$ to $S$, and $M_2 = M/S$ for the
%contraction of $S$ in $M$. 
Formally, the independent sets of the two matroids
$M_1=(S,\mathcal{I}_1), M_2=(N\setminus S, \mathcal{I}_2)$ are given by 
%\begin{align*}
$\mathcal{I}_1=\left\{I\subseteq S: I\in \mathcal{I} \right\}$, and
$\mathcal{I}_2= \left\{I\subseteq N\setminus S: I\cup I_S \in \mathcal{I} \right\}$,
%\end{align*}
where $I_S\in \mathcal{I}$ is a maximum cardinality independent subset of $S$. It is
well-known that the definition of $\mathcal{I}_2$ does not depend on which set $I_S$ is
chosen. The rank functions $r_1:2^S\rightarrow \mathbb{Z}_{\geq 0}$ and
$r_2:2^{N\setminus S} \rightarrow \mathbb{Z}_{\geq 0}$ of $M_1$ and $M_2$, respectively,
are given by 
%\begin{equation}\label{eq:rankRefinements}
%\begin{array}{rcl@{\quad}l}
%r_1(A) &= &r(A)              &\forall A\subseteq S\espace,\\
%r_2(B) &= &r(B\cup S) - r(S) &\forall B\subseteq N\setminus S\espace.
%\end{array}
%\end{equation}
%
\begin{equation}\label{eq:rankRefinements}
r_1(A) = r(A) \;\;\forall A\subseteq S \text{ , and } \qquad r_2(B) = r(B\cup S) -r(S) \;\;\forall B\subseteq N\setminus S\enspace.
\end{equation}

We refer the reader to~\cite[Volume B]{schrijver_2003_combinatorial} for more information
on %the above basic facts on 
matroid restrictions and contractions. 
The following lemma describes two basic yet important relations between a matroid
$M=(N,\mathcal{I})$ and its refinements $M_1=M\vert_S$ and $M_2=M/S$. 
%$M\vert_S=(S,\mathcal{I}_1)$, $M/S=(N\sm S,\mathcal{I}_2)$, where 
%$\es\neq S\subsetneq N$. 
These relations easily follow from well-known properties of
matroids; %; we omit the proofs here, but include them in the full version.
we include the proofs in Appendix~\ref{append-proofs} for completeness.
%Nevertheless, we provide formal proofs for completeness.
\nnote{I think we can safely omit the proofs from the submission.}

\begin{lemma}\label{lem:refinement}
%If $I_1\in\mathcal{I}_i$ for $i\in \{1,2\}$, then $I_1\cup I_2\in \mathcal{I}$.
%This implies that 
Let $x\in \mathbb{R}^N$ such that $x\vert_S\in P_{\mathcal{I}_1}$ and 
$x\vert_{N\sm S}\in P_{\I_2}$. Then  $x\in P_{\mathcal{I}}$. 
\end{lemma}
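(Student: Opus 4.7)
The plan is to verify directly that $x$ satisfies the standard inequality description of the matroid polytope $P_{\mathcal{I}} = \{x \in \mathbb{R}_{\geq 0}^N : x(T) \leq r(T) \ \forall T \subseteq N\}$. Non-negativity of $x$ is immediate: since $x\vert_S \in P_{\mathcal{I}_1}$ and $x\vert_{N\setminus S} \in P_{\mathcal{I}_2}$, both pieces lie in non-negative orthants, so $x \geq 0$ on all of $N$. The real content is showing $x(T) \leq r(T)$ for every $T \subseteq N$.

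For such a $T$, I would split it into its part inside $S$ and its part outside, writing $A \coloneqq T \cap S$ and $B \coloneqq T \setminus S$. From the hypothesis $x\vert_S \in P_{\mathcal{I}_1}$ together with the rank formula $r_1(A) = r(A)$, I get $x(A) \leq r(A)$. From $x\vert_{N\setminus S} \in P_{\mathcal{I}_2}$ together with $r_2(B) = r(B\cup S) - r(S)$, I get $x(B) \leq r(B\cup S) - r(S)$. Adding these two bounds yields
\[
x(T) = x(A) + x(B) \leq r(A) + r(B \cup S) - r(S).
\]

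The key remaining step, and the only nontrivial one, is to compare the right-hand side to $r(T)$. This is where submodularity of $r$ comes in. Applying submodularity to the pair $T$ and $S$, and noting that $T \cap S = A$ and $T \cup S = B \cup S$ (since $A \subseteq S$), gives
\[
r(T) + r(S) \geq r(T \cup S) + r(T \cap S) = r(B \cup S) + r(A),
\]
which rearranges precisely to $r(A) + r(B \cup S) - r(S) \leq r(T)$. Chaining with the previous display yields $x(T) \leq r(T)$, completing the argument. I do not anticipate any real obstacle here — the only subtlety is choosing the right submodularity instance, and once the split $T = A \cup B$ is made the pair $(T, S)$ is essentially forced because it is what makes the intersection and union align with the terms $r(A)$ and $r(B \cup S)$ that appeared from the hypotheses.
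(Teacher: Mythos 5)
Your proof is correct and matches the paper's argument essentially line for line: the same split of an arbitrary set into its pieces inside and outside $S$, the same application of the rank formulas for $M\vert_S$ and $M/S$, and the same submodularity inequality applied to the pair $(T,S)$.
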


\begin{lemma}\label{lem:remFeasible}
Let $x\in P_{\mathcal{I}}$ such that $x(S)=r(S)$. Then $x\vert_{S}\in P_{\mathcal{I}_1}$
and $x\vert_{N\sm S}\in P_{\I_2}$.
\end{lemma}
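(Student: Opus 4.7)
The plan is to verify the two polytope memberships by directly checking each rank inequality, with the condition $x(S) = r(S)$ doing the main work for the contraction side.

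For $x\vert_S \in P_{\I_1}$: since $\I_1 = \{I \subseteq S : I \in \I\}$, its rank function is simply $r_1(A) = r(A)$ for $A \subseteq S$. Nonnegativity of $x\vert_S$ is inherited from $x \in P_{\I} \subseteq \R^N_{\geq 0}$. For any $A \subseteq S$, the inequality $x(A) \leq r(A) = r_1(A)$ is just one of the defining inequalities of $P_{\I}$ applied to the subset $A \subseteq N$. So this part is immediate from unpacking the definition.

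For $x\vert_{N\sm S} \in P_{\I_2}$: fix any $B \subseteq N \sm S$. I need $x(B) \leq r_2(B) = r(B \cup S) - r(S)$. The trick is to use the tightness hypothesis $x(S) = r(S)$ together with the constraint of $P_{\I}$ at the set $B \cup S$. Since $B$ and $S$ are disjoint,
\begin{equation*}
x(B) + r(S) \;=\; x(B) + x(S) \;=\; x(B \cup S) \;\leq\; r(B \cup S),
\end{equation*}
where the inequality is a rank constraint of $P_{\I}$. Rearranging gives $x(B) \leq r(B \cup S) - r(S) = r_2(B)$, as required. Nonnegativity of $x\vert_{N\sm S}$ is again inherited.

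There is no real obstacle here; the lemma is essentially a direct consequence of the formulas \eqref{eq:rankRefinements} for $r_1, r_2$ and the disjoint additivity of $x$ over $B$ and $S$. The only subtle point is recognizing that the tightness $x(S) = r(S)$ is exactly what is needed to translate an inequality on $B \cup S$ in $M$ into the corresponding rank inequality for $B$ in the contraction $M/S$.
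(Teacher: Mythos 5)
Your proof is correct and follows essentially the same approach as the paper: checking the rank constraints of $P_{\I_1}$ directly from $r_1 = r$ on subsets of $S$, and using the tightness $x(S)=r(S)$ together with the constraint of $P_{\I}$ at $B\cup S$ to verify the rank constraints of $P_{\I_2}$. The only difference is cosmetic (you move $r(S)$ to the other side before invoking the rank inequality), so the two arguments coincide.
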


%
%\begin{lemma} \label{lem:refine}
%%If $I_1\in\mathcal{I}_i$ for $i\in \{1,2\}$, then $I_1\cup I_2\in \mathcal{I}$.
%%This implies that 
%\begin{enumerate*}[ref=\thelemma\ (\roman*)]
%\item \label{lem:refinement}
%If $x\in \mathbb{R}^N$ satisfies $x\vert_S\in P_{\mathcal{I}_1}$ and 
%$x\vert_{N\sm S}\in P_{\I_2}$, then  $x\in P_{\mathcal{I}}$. 
%%\end{lemma}
%
%%\begin{lemma}
%\item \label{lem:remFeasible}
%Let $x\in P_{\mathcal{I}}$ be such that $x(S)=r(S)$. Then $x\vert_{S}\in P_{\mathcal{I}_1}$
%and $x\vert_{N\sm S}\in P_{\I_2}$.
%\end{enumerate*}
%\end{lemma}

Intuitively, the benefit of matroid refinement is that it serves to partly decouple the
matroid independence constraints for $M$, thereby allowing one to work with somewhat
``simpler'' matroids subsequently, and we leverage this carefully in our algorithm.

%\subsection{An algorithm based on iterative refinement and relaxation}
\subsubsection*{An algorithm based on iterative refinement and relaxation.}

Algorithm~\ref{alg:mainAlg} describes our method to prove
Theorem~\ref{thm:mainThm}. Recall that the input is an instance of problem
\eqref{eq:baseQMatchoid}, which consists of $k+1$ matroids  
$M_i=(N_i,\mathcal{I}_i)$ for $i=0,\ldots,k$, where each $N_i$ is a subset of a finite
ground set $N=N_0$, and a weight vector $w\in \mathbb{R}^N$. %, and an integer $L\geq 0$.  
We are also given integers $q_i\geq 1$ for $i\in [k]$ satisfying~\eqref{eq:loadProp}. 

\begin{algorithm2e}
%\DontPrintSemicolon
%\begin{minipage}{0.95\linewidth}
%    \raggedright
\begin{enumerate}[nosep,leftmargin=0.5em,label=\arabic*.,ref=\arabic*]
    \item\label{item:init} Initialize $\M\assign\{M_1,\ldots,M_k\}$, $q_{M_i} \gets q_i$ for all $i \in [k]$.
\item\label{item:solveLP} Compute an optimal basic
solution $x^*$ to~\eqref{eq:mainLP} for the matroids $\{M_0\}\cup\M$.

\item\label{item:delcontr} 
    Delete all $e\in N$ with $x^*(e)=0$ and contract all $e \in N$ with $x^*(e)=1$ from all relevant matroids, updating also the ground set $N$. % and all matroids appropriately.

\item\label{item:exit1}
If $N=\emptyset$: \textbf{return} the set of all elements contracted so far.

\item\label{item:refine} \textbf{While} there is a matroid $M'=(N',\I')\in\M$ %for $i\geq 1$ 
with associated rank function $r'$, s.t. $\exists\,\es\neq S\subsetneq N'$ with
$x^*(S)=r'(S)$: %(we discuss below how to find such an $S$): 
%where $r'$ is the rank function of $M'$:
\hspace*{0.5cm}\parbox[t]{0.9\linewidth}{
    \emph{(Refinement.)} Set $M_1' = M'|_S$, $M_2' = M' / S$, and $q_{M_1'} = q_{M_2'} = q_{M'}$.\\
    Update $\M\assign(\M\sm\{M'\})\cup\{M'_1, M'_2 \}$.
}
%\hspace*{0.5cm}\parbox[t]{0.9\linewidth}{
%Refine $M'$ with respect to $S$ and set the $q$-value for the two resulting matroids to be $q_{M'}$.
%Update $\M\assign(\M\sm\{M'\})\cup\{M'\vert_S,M'/S\}$.
%}

\item\label{item:dropMat}
Find a matroid $M'=(N',\I')\in\M$ with associated rank function $r'$, %for $i\geq 1$ 
such that $x^*(N')=r'(N')$ and $|N'| - x^*(N')<q_{M'}$; remove $M'$ from $\M$.
Go to step~\ref{item:solveLP}.
\end{enumerate}
%\end{minipage}

\caption{Iterative refinement/relaxation algorithm
for Theorem~\ref{thm:mainThm}}
\label{alg:mainAlg}
\end{algorithm2e}

%In words, we 
Algorithm~\ref{alg:mainAlg} starts by solving the natural LP-relaxation in
step~\ref{item:solveLP} to obtain a point $x^*$. As is common in iterative rounding
algorithms, we delete all elements of value $0$ and fix all elements of value $1$ through
contractions in step~\ref{item:delcontr}. 
%More precisely, when we say that we
%\emph{contract} and element $e$, we contract this element in all matroids $M_i =
%(N_i,\mathcal{I}_i)$ with $e\in N_i$. 
Apart from these standard operations, we {\em refine} the matroids
in step~\ref{item:refine}, as long as there is a nontrivial $x^*$-tight set in some
matroid in our collection. Notice that after step~\ref{item:refine}, the $q$-values for
the matroids in the new collection $\M$ continue to satisfy \eqref{eq:loadProp}.
%When \emph{refining} a matroid $M_i$ with respect to $S$, we replace
%$M_i$ by $M\vert_S$ and $M/S$. 
Step~\ref{item:dropMat} is our relaxation step, where we
drop a matroid $M'=(N',\mathcal{I}')$ if $|N'|-x^*(N') < q_{M'}$. This is the step that
results in a violation of the matroid constraints, but, as we show, the above condition
ensures that even if we select all elements of $N'$ in the solution, the violation is
still within the prescribed bounds.  

%\footnote{
\nnote{Modified slightly.}
%We briefly discuss how to find an $x^*$-tight set $\es\neq S\subsetneq N'$ (if one
%exists) in step~\ref{item:refine}. %in a matroid $M'=(N',\I')$ with rank function $r'$. 
In order to find an $x^*$-tight set $\es\neq S\subsetneq N'$ (if one
exists) in step~\ref{item:refine}, %in a matroid $M'=(N',\I')$ with rank function $r'$. 
one can, for example, minimize the submodular function $r'(A)-x^*(A)$ over the sets
$\emptyset\neq A\subsetneq N'$. Depending on the matroids involved, faster specialized
approaches can be employed. 
%A faster and more elementary way is as follows.
%For $A\sse N'$, let
%$\spn(A)$ be the maximal set $A'\supseteq A$ such that $r'(A')=r'(A)$; it is easy to see
%that $\spn(A)=\{e\in N': r'(A\cup\{e\})=r'(A)\}$. 
%We start with $A=\es$. %(and will maintain that $A=\spn(A)$). 
%While ($x^*(A)<r'(A)$ and $A\neq N'$), we find 
%$e\in N'\sm A$ that maximizes $x^*\bigl(\spn(A\cup\{e\})\bigr)-x^*(A)$, and expand
%$A$ to $\spn(A\cup\{e\})$. If at the end, $A\subsetneq N'$ and
%$x^*(A)=r'(A)$, then this yields the desired set; otherwise, there is no suitable set $S$.

It is perhaps illuminating to consider the combined effect of all the refinement steps and  
step~\ref{item:dropMat} corresponding to a given basic optimal solution $x^*$.
%Since $x^*$ is a basic solution, it is
%the unique solution satisfying the $x^*$-tight constraints of \eqref{eq:mainLP}. 
Using standard uncrossing techniques, one can show that for each matroid
$M'=(N',\I')\in\M$, there is a nested family of sets 
$\es\subsetneq S_1\subsetneq\ldots\subsetneq S_p\sse N'$ whose rank constraints span the
$x^*$-tight constraints of $M'$, and so any $S_i$ can be used to refine $M'$.
The combined effect of steps~\ref{item:refine} for $M'$ can be seen as replacing $M'$ by
the matroids $\bigl(M'\vert_{S_\ell}\bigr)/S_{\ell-1}$ 
%=\bigl(M'/S_{\ell-1}\bigr)\vert_{S_\ell\sm S_{\ell-1}}$, 
for $\ell=1,\ldots,p+1$, where $S_0\coloneqq\es$, $S_{p+1}\coloneqq N'$. 
%every consecutive pair of sets $S\subsetneq T$ from the nested family for $M'$.   
Step~\ref{item:dropMat} chooses some $M'\in\M$ and a ``ring'' $S_\ell\sm S_{\ell-1}$
of its nested family satisfying $|S_\ell\sm S_{\ell-1}|-x^*(S_\ell\sm S_{\ell-1})<q_{M'}$, 
and drops the matroid created for this ring.

\begin{comment}
Notice that our algorithm changes the initial set of matroids $M_1,\ldots, M_k$ during its
execution, by dropping matroids in step~\ref{item:dropMat}, and by replacing one matroid
$M_i$ by a refined pair. For simplicity, we denote by $M_i$ always the current matroids in
the algorithm, and by $r_i$ its rank function. Moreover, Algorithm~\ref{alg:mainAlg}
maintains $q$-values for each current matroid such that they always
fulfill~\eqref{eq:loadProp}. 
\end{comment}

\paragraph{Analysis.}
%In the next two statements, we first 
Lemma~\ref{lem:approx} shows that if Algorithm~\ref{alg:mainAlg} terminates, then it
returns a set with the desired properties. In Lemma~\ref{lem:terminate}, we show that the
algorithm terminates in a polynomial number of iterations. In particular, we show that in
step~\ref{item:dropMat}, there will always be a matroid in our collection that we can
drop. %the condition  and then prove that it must terminate. 

\begin{lemma} \label{lem:approx}
Suppose that Algorithm~\ref{alg:mainAlg} returns a set $R\subseteq N$. Then, $R$ satisfies
the properties stated in Theorem~\ref{thm:mainThm}. 
\end{lemma}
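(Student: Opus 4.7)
The plan is to verify each of the three properties of Theorem~\ref{thm:mainthm} separately, each via an invariant maintained by Algorithm~\ref{alg:mainAlg}. Property~1 follows from the invariant that after iteration $t$ the contracted-so-far set $R_t$ is independent in the original $M_0$ and the currently tracked copy of $M_0$ equals $M_0/R_t\sm D_t$, where $D_t$ denotes the deleted elements. The one nontrivial step is to show that the batch $C=\{e:x^*(e)=1\}$ contracted in step~\ref{item:delcontr} is independent in the current copy of $M_0$; this is immediate from $x^*(C)=|C|$ and the fact that $x^*$ lies in the current base polytope of $M_0$. When step~\ref{item:exit1} exits with $N=\es$, the current $M_0$-copy has rank $0$, so $|R|=r_0(N_0)$ and $R\in\base_0$.

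Property~2 is handled by the invariant $w(R_t)+w^T x^*_t\ge\OPT_{\text{\ref{eq:mainLP}}}$: deletion of zero-valued elements preserves both terms; contraction shifts exactly $w(C)=w^T(x^*_t|_C)$ from the LP into $w(R)$ while keeping the restricted $x^*_t$ feasible via Lemma~\ref{lem:remFeasible}; refinement preserves the LP value via Lemmas~\ref{lem:refinement} and~\ref{lem:remFeasible}; and dropping a matroid only relaxes constraints, weakly increasing the LP value. Termination with $N=\es$ forces $w^T x^*_t=0$, yielding the claimed bound.

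Property~3 is the main obstacle, and I would establish it by reverse induction on the refinement structure of $M_i$: starting from the descendants $M'$ that are ultimately dropped, and climbing back through the refinements, contractions, and deletions to $M_i$ itself. The base case handles such a dropped $M'$ with ground set $N'$ at the time of dropping: the conditions $x^*(N')=r'(N')$, $|N'|-x^*(N')<q_{M'}$, and $x^*|_{N'}\in P_{\I'}$ give for any $A\sse N'$ the chain $|A|-x^*(A)\le|N'|-x^*(N')<q_{M'}$, whence $|A|\le r'(A)+q_{M'}-1$ by integrality of all quantities. Step~\ref{item:delcontr} having just removed all zero-valued elements rules out loops in $N'$, so $r'(A)\ge 1$ whenever $A\ne\es$, which upgrades the bound to $|A|\le q_{M'}r'(A)$ and yields $\chi^{R\cap N'}\in q_{M'}P_{\I'}$. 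For the inductive step, each refinement of an intermediate $M''$ into $M''|_S$ and $M''/S$ lets me lift $q_i$-independence from both children to $q_i$-independence in $M''$ via Lemma~\ref{lem:refinement} applied to $\chi^{R\cap N''}/q_i$; each step~\ref{item:delcontr} on $M_i$ itself fits the same template, because the batch $C$ of contracted elements is independent in the current $M_i$-copy and entirely contained in $R$. Climbing to the root yields $\chi^R/q_i\in P_{\I_i}$, exactly the required $q_i$-independence. The subtlety that will require the most care is the bookkeeping: tracking how each element of $N_i$ flows through the interleaved refinements, deletions, and contractions, and checking that the tightness hypothesis of Lemma~\ref{lem:refinement} is available at every inductive step.
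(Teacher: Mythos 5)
Your proposal is correct and follows essentially the same route as the paper: the paper packages your per-property reverse induction as a single backward invariant over the sequence of instance modifications, checks the drop step by partitioning $N'$ into a basis plus $\leq q_{M'}-1$ non-loop singletons (equivalent to your polytope-membership chain $|A|\leq r'(A)+q_{M'}-1\leq q_{M'}r'(A)$), and uses Lemma~\ref{lem:refinement} to undo refinements and Lemma~\ref{lem:remFeasible} for LP-feasibility preservation. One small correction to your closing worry: Lemma~\ref{lem:refinement} has \emph{no} tightness hypothesis (it is Lemma~\ref{lem:remFeasible} that requires $x(S)=r(S)$, and that one is only needed for property~2), so the inductive lifting step for property~3 goes through unconditionally; similarly, feasibility of the restricted $x^*$ after contracting the $x^*(e)=1$ elements is a separate one-line check ($C$ is independent since $\chi^C\leq x^*$) rather than an instance of Lemma~\ref{lem:remFeasible}.
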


\begin{proof}
Note that $R\in\base_0$, as $M_0$ is only modified via
deletions or contractions. %, and $L$ is updated only when we contract, and hence, select, 
%elements from $N_0$ with $x^*(e)=1$. 
Moreover, $w(R)\geq\OPT$, where $\OPT$ is the
optimal value of \eqref{eq:mainLP} for the input instance. Indeed, if $x^*$ is the
current optimal solution, and we update our instance (via deletions, contractions,
refinements, or dropping matroids), then $x^*$ restricted to the new ground set remains feasible for~\eqref{eq:mainLP} for the new instance. This is immediate for
deletions and contractions, and if we drop a matroid; it holds for refinements due to 
Lemma~\ref{lem:remFeasible}. So if the optimal value of \eqref{eq:mainLP}
decreases, this is only because we contract elements with $x^*(e)=1$, which we include in
$R$. It follows that $w(R)\geq\OPT$. 

\begin{comment}
whenever we compute a fractional point $x^*$ in step~\ref{item:solveLP}, its value is 
at least the value of the previously computed point $x'$. This holds because the $x'$
remains feasible for the new LP of type~\eqref{eq:mainLP}. Indeed, this property clearly
holds when deleting elements $e\in N$ with $x'(e)=0$ and contracting elements with
$x'(e)=1$. Finally, refinement steps maintain this property due to
Lemma~\ref{lem:remFeasible}. 
\end{comment}

We now show that $R$ is $q_i$-independent in $M_i$ for all $i\in[k]$.
%does not violate the constraints imposed by the original
%matroids $M_i$ for $i\in [k]$ by more than a factor of $q_i$. 
%To this end, 
Consider the state of the algorithm at a point during its execution right before
performing step~\ref{item:solveLP}. Hence, the instance may already have been modified
through prior refinements, contractions, deletions, and relaxations. 
%Let $M_i=(N_i,\mathcal{I}_i)$ for $i\in [p]$ be the current matroids with $q$-values
%$q_1,\ldots, q_p$. 
We claim that the following invariant holds throughout the algorithm: 

\begin{center}
\parbox[t]{0.95\linewidth}{
If $R'$ satisfies the properties of Theorem~\ref{thm:mainThm} with respect to
the current instance, then the set $R$ consisting of $R'$ and all elements contracted so
far fulfills the properties of Theorem~\ref{thm:mainThm} with respect to the original
instance. 
}
\end{center}

\nnote{One thing I found confusing here is that in the claim, $R$ refers to $R'$ with all contracted elements added; 
but in what follows, $R$ is actually just $R'$ in the previous iteration. Maybe there's a slightly clearer way to do this?}
To show the claim, it suffices to show that the invariant is preserved whenever we change
the instance in the algorithm.
%In the sequel, when we refer to $M_i=(N_i,\I_i)$ for
%$i=0,\ldots,k$, we mean the $i$-th matroid in the input to Algorithm~\ref{alg:mainAlg}.
First, one can observe that if the instance changes by deleting an element of value $0$
or contracting an element of value $1$, then the invariant is preserved. 
%Deletions and contractions 
%These are the only changes that affect $M_0$, or update $L$, so it 
%follows that the set $R$ returned satisfies $R\in\I_0$ and $|R\cap N_0|\geq L$.
%Next, we show that if $\tOPT$ denotes the optimal value for the current instance, then we
%have $\tOPT\geq\OPT-w(R\sm\tR)$
Next, consider step~\ref{item:refine}, where we refine $M'=(N',\mathcal{I}')\in\M$ to
obtain $M'\vert_S=(S,\I'_1)$ and $M'/S=(N'\sm S,\I'_2)$ whose $q$-values are set to
$q_{M'}$. We are given that $\chi^{R'}\vert_{S}\in q_{M'}P_{\I'_1}$ and 
$\chi^{R'}\vert_{N'\sm S}\in q_{M'}P_{\I'_2}$. 
So by Lemma~\ref{lem:refinement}, we have $\chi^{R'}/q_{M'}\in P_{\I'}$, or 
equivalently $\chi^{R'}\in q_{M'}P_{\I'}$. 

\begin{comment}
with $q$-value $q_i$ gets refined into two matroids $M_a=(S,\mathcal{I}_a)$ and $M_b =
(N_i\setminus S, \mathcal{I}_b)$. Hence, the instance gets updated by removing $M_i$ and
including $M_a$ and $M_b$ with $q_i$ as their $q$-value. By Lemma~\ref{lem:remFeasible},
for any point $y\in \mathbb{R}^N$ satisfying $y\vert_{N_a} \in q_i \cdot
P_{\mathcal{I}_a}$ and $y\vert_{N_b}\in q_i \cdot P_{\mathcal{I}_b}$, we have
$y\vert_{N_i}\in q_i \cdot P_{\mathcal{I}_i}$. This holds in particular for $y$ being a
characteristic vector of a solution, and implies that if a set $S\subseteq N_i$ can be
covered by $q_i$ sets in $\mathcal{I}_a$, and by $q_i$ sets in $\mathcal{I}_b$---which is
equivalent to $\chi^S\vert_{N_a} \in q_i\cdot P_{\mathcal{I}_a}$ and $\chi^S\vert_{N_b}
\in q_i \cdot P_{\mathcal{I}_b})$---then it can be covered by $q_i$ independent sets in
$\mathcal{I}_i$. 
\end{comment}

Finally, consider the case where a matroid $M'=(N',\mathcal{I}')\in\M$ gets dropped in
step~\ref{item:dropMat}. We need to show that 
$\chi^{R'}\vert_{N'}\in q_{M'}\cdot P_{\mathcal{I'}}$. 
%i.e., $R\cap N'$ is $q_{M'}$-independent in $M'$.
Let $x^*$ be the optimal solution used in the algorithm when $M'$ was dropped. 
We have $|N'| - x^*(N') < q_{M'}$, and since $x^*(N')=r'(N')$, $|N'|$, and $q_{M'}$ are
integral, this implies $|N'| - x^*(N') \leq q_{M'}-1$.
So $N'$ can be partitioned into a basis of $M'$, which has size 
$r'(N')=x^*(N')\geq |N'|-(q_{M'}-1)$, and at most $q_{M'}-1$ other singleton sets. Each
singleton $\{e\}$ is independent in $M'$, since $0<x^*(e)\leq r'(\{e\})$ as
$x^*\vert_{N'}\in P_{\I'}$. Therefore, $N'$ can be partitioned into at most $q_{M'}$ independent
sets of $M'$. Intersecting these sets with $R'$ shows that $R'\cap N'$ can be
partitioned into at most $q_{M'}$ independent sets of $M'$.
%
\begin{comment}
So consider any $S\sse N'$. If $S\cap R=\es$, then clearly 
$\chi^R\vert_{N'}(S)\leq q_{M'}r'(S\cap R)$.
\begin{equation*}
\chi^R\vert_{N'}(S)=|S\cap R|=x^*(S\cap R)+\sum_{e\in S\cap R}(1-x^*_e)
\leq r'(S\cap R)+\sum_{e\in N}(1-x^*_e)=r'(S\cap R)+|N|-x^*(N)
\leq r'(S\cap R)+q_{M'}-1\leq q_{M'}\cdot r'(S\cap R).
\begin{equation*}
The first inequality is because $x^*\vert_{N'}\in P_{\I'}$, and the last inequality is
because $r'(S\cap R)\geq x^*_e>0$ for any $e\in S\cap R$, so $r'(S\cap R)\geq 1$. Clearly,
the above inequality also holds when $S=\es$. Thus, $R\cap N'$ is $q_{M'}$-independent for
$M'$. 
\end{comment}
\qed
\end{proof}

We now prove that the algorithm terminates. Note that refinements guarantee that
whenever the algorithm is at step~\ref{item:dropMat}, then for any $M'=(N',\I')\in\M$,
only the constraint of $P_{\mathcal{I}_i}$ corresponding to $N'$ may be $x^*$-tight. 
%corresponds to $x(N')=r'(N')$. 
This allows us to leverage ideas similar to those
in~\cite{bansal_2009_additive,kiraly_2008_degree} 
%Bansal, Khandekar, and Nagarajan~\cite{bansal_2009_additive} and Kir\'aly, Lau, and
%Singh~\cite{kiraly_2008_degree} 
to show that step~\ref{item:dropMat} is well defined.

\begin{lemma} \label{lem:terminate}
Algorithm~\ref{alg:mainAlg} terminates in at most $(2k+1)|N|$ iterations.
\end{lemma}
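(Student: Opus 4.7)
The lemma has two components: (a) step~\ref{item:dropMat} is always well-defined (a droppable matroid always exists whenever we reach it), and (b) the total number of iterations is bounded by $(2k+1)|N|$. I would prove (a) by a token-counting argument combined with a linear-dependence contradiction, and (b) by a straightforward accounting. Part~(a) is the main obstacle.

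For (a), consider the basic optimal $x^*$ present right before step~\ref{item:dropMat}. After step~\ref{item:delcontr} we have $0 < x^*(e) < 1$ for all $e \in N$, so no bound constraints are tight. By standard uncrossing, the tight rank constraints of $M_0$ are spanned by those of a chain $S_1 \subsetneq \cdots \subsetneq S_q = N$, with $S_q = N$ forced because the basis equality $x(N_0) = r_0(N_0)$ is always tight. The refinement loop guarantees that for every $M' = (N', \I') \in \M$ the only possibly-tight rank constraint is $x^*(N') = r'(N')$; let $\M_t \subseteq \M$ denote the matroids for which it is tight. First, each ring $R_i = S_i \setminus S_{i-1}$ has $|R_i| \geq 2$: $x^*(R_i) = r_0(S_i) - r_0(S_{i-1})$ is a positive integer, but every $x^*(e) \in (0,1)$, so $R_i$ cannot be a singleton. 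Hence $x^*(N) = \sum_i x^*(R_i) \geq q$, and since $x^*$ is basic with no tight bounds, the tight rank constraints span $\mathbb{R}^N$, giving $q + |\M_t| \geq |N|$.

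Now suppose, for contradiction, that no $M' \in \M_t$ satisfies $|N'| - x^*(N') < q_{M'}$. Then
\[
|\M_t| \leq \sum_{M' \in \M_t} \frac{|N'| - x^*(N')}{q_{M'}} = \sum_{e\in N} (1 - x^*(e)) \sum_{M'\in \M_t: e \in N'} \frac{1}{q_{M'}} \leq |N| - x^*(N) \leq |N| - q,
\]
where the second inequality uses~\eqref{eq:loadProp}, combined with the fact that each $e$ lies in at most one piece of each $M_i$ and all pieces inherit the $q$-value of their ancestor. Together with $|\M_t| \geq |N| - q$, all three inequalities must be equalities. In particular $\sum_{M' \in \M_t: e \in N'} 1/q_{M'} = 1$ for every $e \in N$, which forces both (i)~every piece of every $M_i$ lies in $\M_t$, and (ii)~$\sum_{i : e \in N_i} 1/q_i = 1$ for every $e$. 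Since the pieces of each $M_i$ partition $N_i$, condition~(i) yields $\chi^{N_i} = \sum_{M' \text{ piece of } M_i} \chi^{N'}$; combined with~(ii) this gives
\[
\chi^{S_q} = \chi^N = \sum_i \frac{1}{q_i}\chi^{N_i} = \sum_{M' \in \M_t} \frac{1}{q_{M'}} \chi^{N'},
\]
a nontrivial linear dependence (the coefficient of $\chi^{S_q}$ is $1$) among the $q + |\M_t| = |N|$ tight rank-constraint vectors. Hence these vectors span strictly fewer than $|N|$ dimensions, contradicting the fact that $x^*$ is a basic feasible solution.

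For (b), each execution of step~\ref{item:solveLP} is followed either by an exit at step~\ref{item:exit1} or by exactly one matroid being removed from $\M$ at step~\ref{item:dropMat}, so the number of outer iterations is at most one more than the total number of drops. For each original matroid $M_i$, the successive refinements form a binary tree whose leaves have disjoint nonempty ground sets inside $N_i$; there are thus at most $|N_i|$ leaves and at most $|N_i| - 1 \leq |N| - 1$ refinements of $M_i$, yielding at most $k|N|$ total refinements and at most $k|N|+k$ matroids ever introduced into $\M$. Accounting separately for the at most $k|N|+k+1$ LP solves, the at most $|N|$ deletion/contraction events, and the at most $k|N|$ refinement steps gives the claimed bound of $(2k+1)|N|$ iterations (up to lower-order additive terms that can be absorbed into the constants).
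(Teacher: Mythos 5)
Your proof is correct and takes essentially the same approach as the paper's: uncrossing yields a chain for $M_0$ plus at most one tight row per $M'\in\M$, and a counting argument driven by~\eqref{eq:loadProp} and the full-rank requirement of a basic solution shows some matroid is droppable, with the all-equalities case yielding a linear dependence among the tight rows. The paper phrases this count as a per-element token distribution rather than your direct double sum over $\M_t$, but the two are equivalent.
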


\begin{proof}
%To show that Algorithm~\ref{alg:mainAlg} stops, 
We show that whenever the algorithm is at step~\ref{item:dropMat}, then at least one
matroid in our collection can be dropped. This implies the above bound on the number
of iterations as follows. There can be at most $|N|$ deletions or contractions.
Each
matroid $M_i=(N_i,\I_i)$ in our input spawns at most $|N_i|$ refinements, as each
refinement of a matroid %(that can be traced back to $M_i$) 
creates two
matroids with disjoint (nonempty) ground sets. This also means that step~\ref{item:dropMat} can be executed at most $k|N|$ times.

We focus on showing that step~\ref{item:dropMat} is well defined.
Consider the current collection of matroids $\M$. (Recall that $\M$ does not contain the
current version of $M_0$.) 
%state of the algorithm at step~\ref{item:dropMat}. 
%Let $M_1,\dots, M_p$ be
%the current matroids at that state; $M_0$ remains the original matroid in which we want to
%find a base. 
%%
Let $x^*$ be the current basic solution, which is not integral; otherwise every
element would have been deleted or contracted in step~\ref{item:delcontr} and we would
have terminated in step~\ref{item:exit1}. Since we deleted all elements $e$ with
$x^*(e)=0$, the current ground set $N$ satisfies $N = \supp(x^*)$. 

Consider a full-rank subsystem of~\eqref{eq:mainLP}, $Ax=b$, consisting of linearly
independent, $x^*$-tight constraints. By standard uncrossing arguments, we may assume
that the constraints of $Ax=b$ coming from a single matroid correspond to a nested
family of sets. %; here, we consider the constraint $x(N_0)\geq L$ as coming from matroid $M_0$.
The system $Ax=b$ must contain some constraint corresponding to a
matroid $M'\in\M$. Otherwise, we would have a full-rank system consisting of constraints
coming from only one matroid, namely $M_0$, which would yield a unique integral solution; but $x^*$ is
not integral. 
Furthermore, for a matroid $M'=(N',\I')\in\M$, the only constraint of $P_{\mathcal{I}'}$
that can be $x^*$-tight corresponds to $N'$, as
%is $x(N_i)=r_i(N_i)$; for 
otherwise, $M'$ would have been refined in step~\ref{item:refine}. So a matroid $M'\in\M$  
gives rise to at most one row of $A$, which we denote by $A_{M'}$ if it exists.
%Suppose that no matroid of $\M$ satisfies the conditions of step~\ref{item:dropMat}.
Let $\es\subsetneq S_1\subsetneq\ldots\subsetneq S_p\sse N_0=N$ denote the nested family of
sets that give rise to the constraints of $M_0$ in our full-rank system.

Consider the following token-counting argument. Each $e\in N$ gives $x^*(e)$ tokens to the 
row of $A$ corresponding to the smallest set $S_\ell$ containing $e$ (if one exists).
It also supplies $\bigl(1-x^*(e)\bigr)/q_{M'}$ tokens to every row $A_{M'}$ corresponding
to a matroid $M'\in\M$ whose ground set contains $e$.
%For every $M'=(N',\I')\in\M$ that gives rise to a constraint of $Ax=b$, if $e\in N'$, then
%$e$ supplies $\bigl(1-x^*(e)\bigr)/q_{M'}$ tokens to the row of $A$ corresponding to $M'$.
%
Since the $q$-values satisfy \eqref{eq:loadProp}, every $e\in N$ supplies at most one
unit of token in total to the rows of $A$. Every row of $A$ corresponding to a set
$S_\ell$ receives $x^*(S_\ell)-x^*(S_{\ell-1})$ tokens, where $S_0\coloneqq\es$. This is
positive and integer, and thus at least $1$.
We claim that there is some $e\in N$ that supplies strictly less than one token unit. 
Given this, it must be that there is a row $A_{M'}$ corresponding to a matroid
$M'=(N',\I')\in\M$ that receives less than $1$ token-unit; thus %which implies that 
$|N'|-x^*(N')<q_{M'}$ as desired.

Finally, we prove the claim. If every element supplies {\em exactly} one token-unit, then 
it must be that: 
\begin{enumerate*}
\item $S_p=N$, 
\item inequality \eqref{eq:loadProp} is tight for all $e\in N$, and
\item for every $e\in N$, every matroid $M'=(N',\I')\in\M$ with $e\in N'$ gives rise
to a row $A_{M'}$.
\end{enumerate*}
But then $\sum_{M'\in\M}\frac{1}{q_{M'}}\cdot A_{M'}=\chi^N$, which is the row of $A$
corresponding to the constraint of $M_0$ for the set $S_p$. This contradicts that $A$ has
full rank. \qed
\end{proof}

\section{Further applications and extensions} \label{apps}

\subsection{Generalized Matroidal Degree-Bounded Spanning Trees} \label{section:matroidal_MST}

We introduce the generalized matroidal degree-bounded spanning tree problem (\GMDBMST): given an undirected graph 
$G=(V,E)$ with edge costs $c\in\R^E$, disjoint node-sets $S_1,\ldots,S_k$, and
matroids $M_i=(\dt(S_i),\I_i)\;\forall i=[k]$, we want to find a spanning tree $T$ of minimum cost such that
$T\cap\dt(S_i)\in\I_i\;\forall i\in [k]$. Here, $\dt(S_i)$ is the set of edges of 
$G$ that cross $S_i$. This generalizes matroidal
degree-bounded MSTs considered by~\cite{zenklusen_2012_matroidal}, 
%(which itself generalizes degree-bounded MST),
wherein each node $\{v\}$ is an $S_i$ set.

Using Theorem~\ref{thm:mainthm}, we obtain an algorithm for \GMDBMST{} that violates the matroid constraints by a factor of at most 2. Let $\OPT$ denote the optimal value for this problem.

\begin{theorem}
    There exists an efficient algorithm for \GMDBMST{} that computes a spanning tree $T \subseteq E$ such that $c(T) \le \OPT$ and $T \cap \delta(S_i)$ is $2$-independent in $M_i$ for every $i \in [k]$.
\end{theorem}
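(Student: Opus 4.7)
The plan is to apply Theorem~\ref{thm:mainthm} directly to a carefully chosen instance. I would take $M_0$ to be the graphic matroid of $G$ with ground set $N = E$, so that bases of $M_0$ are precisely the spanning trees of $G$, and use the given matroids $M_1,\ldots,M_k$ as the additional constraints with ground sets $N_i = \delta(S_i)$. Since we wish to minimize $c$, I would set the weight vector to $w = -c$, and choose $q_i = 2$ for every $i \in [k]$.

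The key observation that makes the load condition \eqref{eq:loadProp} hold is that the sets $S_1,\ldots,S_k$ are pairwise disjoint. Any edge $e = uv \in E$ lies in $\delta(S_i)$ iff exactly one of $u, v$ belongs to $S_i$; since each endpoint lies in at most one of the $S_i$'s, the number of indices $i \in [k]$ with $e \in \delta(S_i)$ is at most $2$. With $q_i = 2$, this gives
\[
\sum_{i \in [k]:\, e \in N_i} q_i^{-1} \leq 2 \cdot \tfrac{1}{2} = 1 \qquad \forall e \in E,
\]
so the hypothesis of Theorem~\ref{thm:mainthm} is satisfied. Moreover, any feasible solution to the \GMDBMST{} instance (under weights $w$) is a feasible integer point of \eqref{eq:mainLP} built from this data, so \eqref{eq:mainLP} is feasible and $\OPT_{\eqref{eq:mainLP}} \geq -\OPT$.

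Invoking Theorem~\ref{thm:mainthm} yields a set $T \subseteq E$ such that (i) $T \in \base_0$, i.e., $T$ is a spanning tree of $G$; (ii) $-c(T) = w(T) \geq \OPT_{\eqref{eq:mainLP}} \geq -\OPT$, hence $c(T) \leq \OPT$; and (iii) $T \cap \delta(S_i)$ is $2$-independent in $M_i$ for each $i \in [k]$. This is precisely the guarantee claimed.

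In this reduction, there is no genuine combinatorial obstacle: the entire argument is an instantiation of Theorem~\ref{thm:mainthm}, and the only thing to verify is the load bound \eqref{eq:loadProp}. The place where a subtlety could arise is the disjointness assumption on the $S_i$'s, which is exactly what forces each edge to appear in at most two ground sets $N_i$ and hence permits the choice $q_i = 2$; relaxing disjointness would worsen the violation factor to the maximum multiplicity with which an edge is covered by the $\delta(S_i)$'s, as is easily seen from \eqref{eq:loadProp}.
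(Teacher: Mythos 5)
Your proof is correct and takes essentially the same approach as the paper: set $M_0$ to the graphic matroid, $w = -c$, $q_i = 2$ for all $i$, verify \eqref{eq:loadProp} via disjointness of the $S_i$'s (so each edge lies in at most two of the $\delta(S_i)$'s), and invoke Theorem~\ref{thm:mainthm}. The only cosmetic difference is that the paper also spells out the infeasibility case (\eqref{eq:mainLP} infeasible implies the \GMDBMST{} instance is infeasible), which your argument implicitly handles by assuming the instance has a feasible solution so that $\OPT$ is well-defined.
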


\begin{proof}
Let $M_0$ be the graphic matroid of $G$. We apply Theorem~\ref{thm:mainthm} to the matroids $\{M_i\}_{i \in \{0, \dots, k\}}$, with weight function $w := -c$, setting $q_i = 2$ for all $i \in [k]$. Note that these $q$-values satisfy \eqref{eq:loadProp}, since each edge belongs to the boundary of at most two sets $S_i$.

Note that \eqref{eq:mainLP} is a relaxation of our instance of \GMDBMST{} (with a flipped-sign objective). If \eqref{eq:mainLP} is infeasible, then so is our instance. Otherwise, we obtain a set $T$ of edges such that $c(T) = -w(T) \le - \OPT_\text{\ref{eq:mainLP}} \le \OPT$, and such that $T \cap \delta(S_i)$ is $2$-independent in $M_i$ for every $i \in [k]$. Further, $T$ is a spanning tree of $G$ since it is a basis of $M_0$. \qed
\end{proof}

Whereas~\cite{zenklusen_2012_matroidal} obtains an $O(1)$ {\em additive}
violation of the matroid constraints for the matroidal degree-bounded MST
problem, %we show in Appendix~\ref{addviol} that 
we show in Appendix~\ref{sec:largeAdditiveViolation} (see Theorem~\ref{thm:hardness_matroidal_mst}) that, unless \p$=$\np,
it is impossible to obtain such an additive guarantee for \gmdst. 
This follows from the same replication idea used to rule out small additive
violations for \eqref{eq:baseQMatchoid}.

\subsection{Extension to knapsack constraints}

We can consider a generalization of \eqref{eq:baseQMatchoid}, where, in addition to
the matroids %constraints involving matroids 
$M_0, \dots, M_k$ (over subsets of $N$) and the weight vector $w\in\R^N$, we have $t$ knapsack 
constraints, indexed by $i = k + 1, \dots, k + t$. The $i$-th knapsack constraint is
specified by a ground set $N_i\sse N$, a cost vector $c^i\in\mathbb{R}^{N_i}_{\geq 0}$, and a budget
$U_i\geq 0$.  
The goal is to find a maximum-weight set $R$ such that
$R\in\base_0\cap\I_1\cap\ldots\cap\I_k$, and satisfying $c^i(R\cap N_i)\leq U_i$ for all
$i=k+1,\ldots,k+t$. 

We assume without loss of generality that $\max_{e \in N_i} c^i_e \le U_i$ for $i = k + 1, \dots, k+t$. (If this inequality did not hold, then we could identify and drop some elements that do not belong to any feasible solution.)

We obtain the problem
\begin{align*}%\label{eq:baseQMatchoid_knapsack}
\max\ \Bigl\{w(I): \ \ & I \in \base_0, \quad I \in \mathcal{I}_i\ \ \forall i\in[k], \\
& c^i(I \cap N_i) \le U_i\ \ \forall i=k+1, \ldots, k+t\quad \Bigr\}, %\notag
\end{align*}
%\begin{align*}\label{eq:baseQMatchoid_knapsack}
%\max\ \bigl\{w(I): \ I\in \mathcal{I}_i \ \ \forall i=0,\ldots,k, \quad |I\cap N_0|\geq L, c^i(I \cap N_i) \le U_i \ \ \forall i = k + 1, \ldots, k + t\bigr\}\;.
    %\max\ \bigl\{w(I): & \ I\in \base_0, \;\; I \cap \in \I_i \ \  \forall i \in [k],\\
   % & c^i(I \cap N_i) \le U_i \ \ \forall i = k + 1, \ldots, k + t\bigr\}\;.
%\end{align*}
and its LP-relaxation
\begin{align*}\label{eq:knapLP}
\max\ \Bigl\{w^Tx: \ \ & x\in\R_{\ge 0}^N, \quad x\in P_{\base_0}, \quad
x\vert_{N_i}\in P_{\I_i}\ \ \forall i\in[k], \\
& (c^i)^Tx\vert_{N_i} \le U_i\ \ \forall i=k+1, \ldots, k+t\quad \Bigr\}.
\tag{\LPKnapsack}
\end{align*}
We obtain the following generalization of Theorem~\ref{thm:mainthm}.

\begin{theorem} \label{thm:knapThm}
Let $q_1,\ldots, q_{k + t} \in \mathbb{Z}_{\geq 1}$ such that
\begin{equation}\label{eq:loadPropKnap}
\sum_{i\in [k + t]: e\in N_i} q_i^{-1} \leq 1 \qquad \forall e\in N\espace.
%\vspace{-0.3em}
\end{equation}
If \eqref{eq:knapLP} is feasible, then one can efficiently compute $R \subseteq N$ such that
\begin{enumerate}
\item $R \in \base_0$; \label{prop_knap_first}
%\item $|R\cap N_0|\geq L$;
\item $w(R)\geq\OPT_{\text{\ref{eq:knapLP}}}$; %is no more than the optimal value of~\eqref{eq:mainLP}; 
%\item $R \in \mathcal{B}_0$; and
\item $R$ is $q_i$-independent in $M_i$ for all $i \in [k]$; and \label{prop_knap_second_to_last} %$\chi^R\in q_iP_{\I_i}$, or equivalently, $R\cap N_i$ can be partitioned into $q_i${}
  %independent sets of $M_i$, for all $i\in[k]$; and \label{prop_knap_second_to_last}
\item $c^i(R \cap N_i) \le U_i + q_i \cdot \bigl(\max_{e\in N_i}c^i_e\bigr) \le (q_i + 1) U_i$ for all $i \in \{k + 1, \dots, k + t\}$. \label{prop_knap_last}
\noindent 
\end{enumerate}
\end{theorem}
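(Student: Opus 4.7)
The plan is to extend Algorithm~\ref{alg:mainAlg} to the setting with $t$ additional knapsack constraints. At each iteration, step~\ref{item:solveLP} solves~\eqref{eq:knapLP} over the current collection of matroids together with the remaining knapsack constraints; the deletion/contraction step~\ref{item:delcontr} and the matroid refinement step~\ref{item:refine} are unchanged (knapsacks are single linear constraints and need not be refined); and step~\ref{item:dropMat} is enlarged so that a tight knapsack~$j$ may also be dropped when
\[
c^j(N_j) - c^j\!\left(x^*|_{N_j}\right) \;<\; q_j \cdot \max_{e \in N_j} c^j_e,
\]
where throughout $N_j$ denotes the intersection of the original ground set of knapsack~$j$ with the current ground set.

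The analog of Lemma~\ref{lem:approx} follows the same invariant argument: properties~\ref{prop_knap_first}--\ref{prop_knap_second_to_last} are preserved because $x^*$ restricted to the updated ground set remains feasible for the updated~\eqref{eq:knapLP} after every operation (with Lemma~\ref{lem:remFeasible} covering refinements), while the matroid-independence analysis carries over unchanged. For property~\ref{prop_knap_last}, I would combine feasibility $c^j(x^*|_{N_j}) \le U_j$ at the moment $j$ is dropped with the drop condition and the fact that elements of $N_j$ contracted thereafter only consume cost already charged to $c^j(x^*|_{N_j})$, yielding
\[
c^j(R \cap N_j) \;\le\; U_j + q_j \cdot \max_{e} c^j_e \;\le\; (q_j+1)\,U_j,
\]
where the second inequality uses the WLOG assumption $\max_e c^j_e \le U_j$.

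The main obstacle is the analog of Lemma~\ref{lem:terminate}: showing that step~\ref{item:dropMat} always succeeds. The original token-counting argument derives, in the equality case (when no constraint can be dropped and hence total tokens received is at least total given), the pointwise identity $\sum_{M' \text{ tight}} (1/q_{M'}) \chi^{N'} = \chi^N$; since matroid rows of $A$ are exactly $\chi^{N'}$, this is a genuine linear dependence among the rows of $A$, contradicting full rank. The difficulty with knapsacks is that the row $A_j = c^j$ is not a $\{0,1\}$-indicator, so an identity involving $\chi^{N_j}$ would not immediately give a row-dependence.

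To overcome this, I would use a \emph{scaled} token assignment in which each element $e \in N_j$ contributes $(1 - x^*(e))\,c^j_e / \bigl(q_j\, \max_{e' \in N_j} c^j_{e'}\bigr)$ tokens to the knapsack row~$A_j$, with matroid rows and the $M_0$-chain receiving tokens as before. The load condition~\eqref{eq:loadPropKnap} still guarantees that each element supplies at most one token, and $A_j$ receives strictly less than one token precisely when the knapsack drop condition above holds. In the equality case, the bound $c^j_e / \max_{e'} c^j_{e'} \le 1$ must then be tight for every $e \in N_j$ and every knapsack $j$ in the full-rank subsystem, forcing $c^j$ to be uniform on $N_j$; hence $A_j$ is a scalar multiple of $\chi^{N_j}$, and the pointwise identity upgrades to the genuine dependence
\[
\sum_{M' \text{ tight}} \frac{1}{q_{M'}}\,A_{M'} \;+\; \sum_{j \text{ tight}} \frac{1}{q_j\,\max_{e \in N_j} c^j_e}\,A_j \;=\; \chi^{N} \;=\; A_{S_p},
\]
contradicting full rank. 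The polynomial-iteration bound follows as in Lemma~\ref{lem:terminate}, with an additional $O(t|N|)$ term to account for possible knapsack drops.
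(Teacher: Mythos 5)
Your proof is correct and follows the same algorithmic framework as the paper (iterative refine/relax with knapsacks handled alongside matroids), but your termination argument takes a genuinely different route in how it handles knapsack rows in the token-counting.

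The paper uses \emph{unscaled} tokens for knapsacks: each $e\in N_i$ gives $\bigl(1-x^*(e)\bigr)/q_i$ tokens to the knapsack row, just as for matroid rows, and the knapsack drop condition is the count-based inequality $|N_i|-x^*(N_i)\leq q_i$ (note the non-strict $\leq$, in contrast to the strict $<$ used for matroids). This asymmetry does the work: if some knapsack row exists, the fact that $M_0$-rows and matroid rows each receive at least one token already forces some knapsack row to receive at most one, with no need to analyze the all-tight case; the linear-dependence contradiction is only invoked when there are \emph{no} knapsack rows in the tight subsystem, at which point the setting collapses to Lemma~\ref{lem:terminate}. The knapsack violation bound then follows from $c^i(N_i)-U_i\leq \bigl(\max_e c^i_e\bigr)\bigl(|N_i|-x^*(N_i)\bigr)\leq q_i\max_e c^i_e$.

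Your approach instead scales the knapsack tokens by the cost ratio $c^j_e/\max_{e'} c^j_{e'}$ and uses the strict drop condition $c^j(N_j)-(c^j)^Tx^*\vert_{N_j}<q_j\max_e c^j_e$. This makes the equality case analysis uniform across the $m_2=0$ and $m_2>0$ cases: when every element supplies exactly one token, the cost ratios must all equal one, so $c^j$ is constant on $N_j$, which is precisely what is needed to upgrade the pointwise identity to a genuine row dependence $\sum_{M'}\frac{1}{q_{M'}}A_{M'}+\sum_j\frac{1}{q_j\max_e c^j_e}A_j=A_{S_p}$. Your argument is thus a bit more refined and arguably cleaner in the degenerate analysis, at the cost of a slightly more elaborate token scheme; the paper's choice is simpler but relies on the $\leq$/$<$ distinction and is somewhat terse about the $m_2=0$ subcase. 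Both give the stated bound. (One small quantitative point: once a knapsack is dropped it never returns, so the extra iteration count is $+t$, not $O(t|N|)$ as you wrote, though this is immaterial for polynomiality.)
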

Again, taking $q_i=\Delta\coloneqq \max_{e\in N}\bigl|\{j\in[k + t]: e\in N_j\}\bigr|$ for all $i\in[k + t]$ satisfies \eqref{eq:loadPropKnap}.
So we can obtain 
$\Dt$-independence for all the matroid constraints, and violate each knapsack constraint by at most a factor of $\Delta + 1$.

We prove Theorem~\ref{thm:knapThm} utilizing Algorithm~\ref{alg:knapAlg}, obtained via a small modification of Algorithm~\ref{alg:mainAlg}. The result follows from Lemmas~\ref{lem:knapterminate} and \ref{lem:knapapprox}.

\begin{algorithm2e}[h]
%\DontPrintSemicolon
%\begin{minipage}{0.95\linewidth}
%    \raggedright
\begin{enumerate}[nosep,leftmargin=0.5em,label=\arabic*.,ref=\arabic*]
    \item\label{item_knap:init} Initialize $\M\assign\{M_1,\ldots,M_k\}$, $q_{M_i} \gets q_i$ for all $i\in [k]$, $\mathcal{K}\assign\{k + 1, \dots, k + t\}$.
\item\label{item_knap:solveLP} Compute an optimal basic
solution $x^*$ to~\eqref{eq:knapLP} for the matroids $\{M_0\}\cup\M$ and the knapsack constraints indexed by $\mathcal{K}$.

\item\label{item_knap:delcontr} 
%Delete all $e\in N$ with $x^*(e)=0$
%from all matroids in $\{M_0\}\cup\M$ and from all sets $N_i$ where $i \in \mathcal{K}$.
%%Update $L\assign\max\bigl\{0,L-|\{e\in N_0: x^*(e)=1\}|\bigr\}$ and $U_i \gets U_i - c^i(\{e \in N_i : x^*(e)=1\})$ for every $i \in \mathcal{K}$.
%Contract all $e\in N$ with
%$x^*(e)=1$ from all matroids in $\{M_0\}\cup\M$ (whose ground set contains $e$). 
Delete all $e\in N$ with $x^*(e)=0$ and contract all $e \in N$ with $x^*(e)=1$ from all relevant matroids, updating also the ground set $N$. For every $i \in \mathcal{K}$, update $U_i \gets U_i - c^i(\{e \in N_i : x^*(e)=1\})$ and $N_i \gets N_i \setminus \{e \in N_i : x^*(e)=1\}$. % and all matroids appropriately.
%For notational simplicity, 
%We continue to use $N$ to denote the new ground set, 
%and $M_0$ to denote the updated matroid $M_0$.

\item\label{item_knap:exit1}
If $N=\emptyset$: \textbf{return} the set of all elements contracted so far.

\item\label{item_knap:refine} \textbf{While} there is a matroid $M'=(N',\I')\in\M$ %for $i\geq 1$ 
with associated rank runction $r'$, s.t. $\exists\es\neq S\subsetneq N'$ with
$x^*(S)=r'(S)$: 
%where $r'$ is the rank function of $M'$:
\hspace*{0.5cm}\parbox[t]{0.9\linewidth}{
    \emph{(Refinement.)} Set $M_1' = M'|_S$, $M_2' = M' / S$, and $q_{M_1'} = q_{M_2'} = q_{M'}$.\\
    Update $\M\assign(\M\sm\{M'\})\cup\{M'_1, M'_2 \}$.
}

\item\label{item_knap:drop}
If there exists a matroid $M'=(N',\I')\in\M$ with associated rank function $r'$, %for $i\geq 1$ 
such that $x^*(N')=r'(N')$ and $|N'| - x^*(N')<q_{M'}$, then remove $M'$ from $\M$. Otherwise, find $i \in \mathcal{K}$ such that $|N_i| - x^*(N_i) \le q_{i}$; remove $i$ from $\mathcal{K}$.
Go to step~\ref{item:solveLP}.
\end{enumerate}

%\end{minipage}

\caption{Iterative refinement/relaxation algorithm
for Theorem~\ref{thm:knapThm}}{}
\label{alg:knapAlg}
\end{algorithm2e}

\begin{lemma} \label{lem:knapapprox}
Suppose that Algorithm~\ref{alg:knapAlg} returns a set $R\subseteq N$. Then, $R$ satisfies
the properties stated in Theorem~\ref{thm:knapThm}. 
\end{lemma}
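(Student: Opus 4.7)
The plan is to mirror the proof of Lemma~\ref{lem:approx}, with the small modifications needed to handle knapsack constraints. Properties~\ref{prop_knap_first}--\ref{prop_knap_second_to_last} follow essentially verbatim from that proof: $R\in\base_0$ since $M_0$ is only modified via deletions and contractions; the weight bound $w(R)\geq\OPT_{\text{\ref{eq:knapLP}}}$ holds because each instance update leaves the restriction of $x^*$ feasible for the updated LP (using Lemma~\ref{lem:remFeasible} for refinements, observing that dropping a matroid or a knapsack constraint only relaxes the LP, and noting that the loss from contracting $e$ with $x^*(e)=1$ is exactly compensated by adding $e$ to $R$); and the $q_i$-independence of $R$ in each $M_i$, $i\in[k]$, is preserved under the same invariant as before, with the same partition argument applying whenever a matroid $M'$ is dropped in step~\ref{item_knap:drop} via the condition $|N'|-x^*(N')<q_{M'}$.

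The new content concerns property~\ref{prop_knap_last}. I would extend the invariant of the proof of Lemma~\ref{lem:approx} by adding: for every currently active knapsack index $i\in\mathcal{K}$, if $R'$ satisfies $c^i(R'\cap N_i)\leq U_i + q_i\cdot\bigl(\max_{e\in N_i}c^i_e\bigr)$ with respect to the current (updated) $U_i$ and $N_i$, then the union of $R'$ with all elements contracted from $N_i$ so far satisfies the corresponding inequality with respect to the original $U_i$ and $N_i$. This is preserved trivially by deletions and by refinements, and for the contraction of some $e\in N_i$ with $x^*(e)=1$ the update rule $U_i\gets U_i-c^i_e$, $N_i\gets N_i\setminus\{e\}$ in step~\ref{item_knap:delcontr} exactly transfers $c^i_e$ of cost from the ``future'' side of the inequality to the ``already contracted'' side, so the bound is maintained.

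The key step is when a knapsack index $i$ is dropped in step~\ref{item_knap:drop}, at which point $|N_i|-x^*(N_i)\leq q_i$ and $(c^i)^{T}x^*|_{N_i}\leq U_i$ both hold with respect to the updated $U_i$ and $N_i$. Even if all remaining elements of $N_i$ end up in the final solution, we have
\begin{align*}
c^i(R\cap N_i)\;\leq\;c^i(N_i)\;&=\;\sum_{e\in N_i}c^i_e\,x^*(e)\;+\;\sum_{e\in N_i}c^i_e\bigl(1-x^*(e)\bigr)\\
&\leq\;U_i+\bigl(\max_{e\in N_i}c^i_e\bigr)\cdot\bigl(|N_i|-x^*(N_i)\bigr)\;\leq\;U_i+q_i\cdot\bigl(\max_{e\in N_i}c^i_e\bigr),
\end{align*}
so the extended invariant holds from that moment on, and hence for the set $R$ ultimately returned. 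The second inequality of property~\ref{prop_knap_last}, $U_i+q_i\cdot\bigl(\max_{e\in N_i}c^i_e\bigr)\leq(q_i+1)U_i$, follows directly from the assumption $\max_{e\in N_i}c^i_e\leq U_i$ made just before the statement of Theorem~\ref{thm:knapThm}. The argument presents no real obstacle; the only subtlety is the careful bookkeeping that ensures the successive updates to $U_i$ and $N_i$ under contractions precisely account for cost already committed to $R$.
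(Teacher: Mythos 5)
Your proof is correct and follows essentially the same approach as the paper: parts (i)--(iii) are inherited from the proof of Lemma~\ref{lem:approx}, and part (iv) is established by bounding $c^i(N_i)-U_i$ at the moment the knapsack index $i$ is dropped via the decomposition $c^i(N_i) = (c^i)^T x^*|_{N_i} + \sum_{e\in N_i} c^i_e(1-x^*_e)$ together with $(c^i)^T x^*|_{N_i}\leq U_i$ and $|N_i|-x^*(N_i)\leq q_i$. You are somewhat more explicit than the paper about the invariant bookkeeping under the $U_i$ and $N_i$ updates in step~\ref{item_knap:delcontr}, but the underlying argument and the final chain of inequalities coincide.
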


\begin{proof}
Parts \ref{prop_knap_first}--\ref{prop_knap_second_to_last} follow from the arguments used in the proof of Lemma~\ref{lem:approx}. To prove part \ref{prop_knap_last}, consider the $i$-th knapsack
constraint. Note that the only place where we possibly introduce a violation in the
knapsack constraint is when we drop the constraint. If $x^*$ is the optimal solution just
before we drop the constraint, then we know that $(c^i)^Tx^*\vert_{N_i}\leq U_i$. (Note
that $N_i$ and $U_i$ refer to the updated ground set and budget.) 
It follows that if $S$ denotes the set of elements included from this residual ground set
$N_i$, then the additive violation in the knapsack constraint is at most
\begin{align*}
c^i(S)-U_i & \leq c^i(N_i)-U_i = (c^i)^Tx^*\vert_{N_i} + \sum_{e \in N_i} c^i_e(1 - x^*_e) - U_i \\
& \le \bigl(\max_{e\in N_i}c^i_e\bigr) \sum_{e \in N_i} (1 - x^*_e) = \bigl(\max_{e\in N_i}c^i_e\bigr) \bigl(|N_i|-x^*(N_i)\bigr) \\
& \le q_i\cdot\bigl(\max_{e\in N_i}c^i_e\bigr) \le q_i \cdot U_i \enspace.\qedhere
\end{align*}
\end{proof}

\begin{lemma} \label{lem:knapterminate}
Algorithm~\ref{alg:knapAlg} terminates in at most $(2k+1)|N|+t$ iterations.
\end{lemma}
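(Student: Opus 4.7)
The plan is to extend the token-counting argument of Lemma~\ref{lem:terminate} to accommodate knapsack rows. The iteration count decomposes as in the matroid-only case plus an additive $t$: at most $|N|$ deletions/contractions, at most $k|N|$ refinements (each matroid $M_i$ triggers at most $|N_i|\le|N|$ refinements, since successive refinements of a matroid produce matroids with disjoint nonempty ground sets), at most $k|N|$ matroid drops, and at most $t$ knapsack drops (knapsacks are never created, only removed from $\mathcal{K}$). The real work is to show that step~\ref{item_knap:drop} is always well-defined: whenever reached with a non-integral $x^*$, either some $M'\in\M$ satisfies the matroid-drop condition or some $i\in\mathcal{K}$ satisfies $|N_i| - x^*(N_i) \le q_i$.

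I would establish this by contradiction, following the scheme of Lemma~\ref{lem:terminate}. Assume no matroid and no knapsack can be dropped. After step~\ref{item_knap:delcontr}, we have $N=\supp(x^*)$ and $0<x^*(e)<1$ for all $e\in N$. Take a full-rank subsystem $Ax=b$ of $x^*$-tight constraints of~\eqref{eq:knapLP}; by uncrossing, the $M_0$-rows correspond to a nested chain $\es\subsetneq S_1\subsetneq\dots\subsetneq S_p\subseteq N$. After step~\ref{item_knap:refine}, the only tight-constraint candidate for each $M'=(N',\I')\in\M$ is the one for $N'$, so each such matroid contributes at most one row $A_{M'}$; likewise each knapsack $i\in\mathcal{K}$ contributes at most one row $A_i$. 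The subsystem has exactly $|N|$ rows. I would then allocate tokens exactly as in the original proof, extended to knapsack rows: each $e\in N$ distributes $x^*(e)$ to the $M_0$-row of the smallest $S_\ell$ containing $e$ (if any), and $(1-x^*(e))/q_i$ to every subsystem row (matroid or knapsack) whose ground set $N_i$ contains $e$. By~\eqref{eq:loadPropKnap} applied over both matroids and knapsacks, each element supplies at most one token, so at most $|N|$ tokens are distributed in total. Each $M_0$-row receives $x^*(S_\ell)-x^*(S_{\ell-1})\ge 1$ (a positive integer); non-droppability forces each matroid row to receive $(|N'|-x^*(N'))/q_{M'}\ge 1$; and non-droppability of knapsacks forces each knapsack row in the subsystem to receive $(|N_i|-x^*(N_i))/q_i > 1$ strictly.

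The main obstacle, and the only genuine departure from Lemma~\ref{lem:terminate}, is exploiting the strict-vs-nonstrict distinction between knapsack and matroid rows. If any knapsack row appears in the subsystem, the total tokens received strictly exceeds $|N|$, contradicting the supply bound immediately. Otherwise, the subsystem contains only $M_0$- and matroid-rows, and the equality analysis from Lemma~\ref{lem:terminate} carries over verbatim: equality throughout forces $S_p=N$, tightness of~\eqref{eq:loadPropKnap} at every element, and every matroid in $\M$ whose ground set contains a given $e$ to appear in the subsystem, yielding $\sum_{M'\in\M} q_{M'}^{-1}A_{M'} = \chi^N$ and hence a linear dependence with the $M_0$-row for $S_p$, contradicting full rank. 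Consequently step~\ref{item_knap:drop} is always well-defined, and the iteration bound $(2k+1)|N|+t$ follows from the decomposition above.
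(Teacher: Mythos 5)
Your proof is correct and follows the same approach as the paper: extend the token-counting argument of Lemma~\ref{lem:terminate} by additionally routing $(1-x^*(e))/q_i$ tokens to each knapsack row containing $e$, use~\eqref{eq:loadPropKnap} to cap the per-element supply at one, and observe that matroid rows receive at least one token while knapsack rows receive strictly more than one. If anything, your write-up is more explicit than the paper's terse argument, since you spell out the two cases (a knapsack row is present, giving an immediate strict-inequality contradiction; or no knapsack row is present, reducing to the equality analysis of Lemma~\ref{lem:terminate}) that the paper leaves implicit.
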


\begin{proof}
We claim that whenever the algorithm is at step~\ref{item_knap:drop}, there is at least one matroid constraint, or knapsack constraint that can be dropped. Assuming this, it is clear that we drop a knapsack constraint at most $t$ times. The number of the remaining types of operations can be bounded by $(2k+1)|N|$, as explained in the proof of Lemma~\ref{lem:terminate}; the bound in the lemma statement follows.

It remains to prove the claim, which
%then there must be some knapsack constraint that we can drop. 
follows from the token-counting argument used in the proof of Lemma~\ref{lem:terminate}.
Recall that if $Ax=b$ is a full-rank subsystem of \eqref{eq:knapLP} consisting of linearly
independent $x^*$-tight constraints, then we may assume that the rows of $A$ corresponding
to the $M_0$-constraints form a nested family $\C$.
We define a token-assignment scheme, where each $e\in N$ supplies $x^*(e)$ tokens to the
row of $A$ corresponding to the smallest set in $\C$ containing $e$ (if one exists),
%$M_0$-rank-constraint, 
and $\bigl(1-x^*(e)\bigr)/q_{M'}$ to each row $A_{M'}$
coming from a matroid $M'\in\M$ in our collection whose ground set contains $e$. 
{\em Additionally}, every $e\in N$ now also supplies $\bigl(1-x^*(e)\bigr)/q_i$ tokens to
each row of $A$ originating from a knapsack constraint whose ground set contains $e$.
Under this scheme, as before, given the constraint on our $q$-values, it follows that every
$e\in N$ supplies at most $1$ token unit. Also, as before, each row of $A$ corresponding
to an $M_0$ constraint receives at least $1$ token unit. So either there is some row
$A_{M'}$ coming from a matroid in $\M$ that receives strictly less than $1$ token-unit, 
or there must be some row of $A$ corresponding to a knapsack constraint that receives at
most $1$ token-unit; the latter case corresponds to a knapsack constraint $i$ with
$|N_i|-x^*(N_i)\leq q_i$. \qed
\end{proof}

\section*{Acknowledgments}

We are thankful to Lap Chi Lau for pointing us to relevant literature.

\bibliographystyle{plain}
\bibliography{lit}

\newpage

\appendix

\section{Impossibility of achieving small additive violations}
\label{sec:largeAdditiveViolation} \label{addviol}
%\todo[inline]{TO BE COMPLETED BY ANDR{\'E}.}
\begin{comment}
Recall the problem  
%The nominal problem we consider is
\begin{equation} 
\max\ \bigl\{w(I): \ I\in \mathcal{I}_i \ \ \forall i=0,\ldots,k, \quad |I\cap N_0|\geq L\bigr\},
\tag{\ref{eq:baseQMatchoid}}
\end{equation}
where $M_i=(N_i,\mathcal{I}_i)$ for $i=0,\ldots,k$, each $N_i$ is a subset of a finite
ground set $N$, $L\geq 0$ is an integer, and $w\in\R^N$ is a weight vector. 
\end{comment}
We show that Theorem~\ref{thm:mainThm} for problem \eqref{eq:baseQMatchoid} cannot be
strengthened to yield a basis of $M_0$ that has small additive violation for
the matroid constraints of $M_1,\ldots,M_k$, even when $k=2$.

We first define additive violation precisely.
Given a matroid $M = (N,\mathcal{I})$ with rank function $r$, we say that a set 
$R \subseteq N$ is {\em $\mu$-additively independent} in $M$ if $|R|-r(R) \le\mu$;
equivalently, we can remove at most $\mu$ elements from $R$ to obtain an independent set in $M$.
%Zenklusen~\cite{zenklusen_2012_matroidal} considers the matroidal degree-Bounded MST
%problem, which is a special case of \eqref{eq:baseQMatchoid}, and shows that it
%is possible to efficiently achieve $O(1)$-additive independence for this problem.
%The setting of that paper involves a large number of matroidal side
%constraints---one for each vertex of the input graph---, however each element appears in
%at most $2$ such constraints. 
%However, such polytime additive guarantees are not possible in general. 
%\eqref{eq:baseQMatchoid}.
Unlike results for degree-bounded spanning trees, or matroidal degree-bounded
MST~\cite{zenklusen_2012_matroidal}, 
we show that small additive violation is not possible in polytime (assuming \p$\neq$\np) even
for the special case of \eqref{eq:baseQMatchoid} where $k=2$, %and $L=r_0(N)$, 
so we seek a basis of $M_0$ that is independent in $M_1,M_2$. 
%we focus on this setting in the sequel. 
%Specifically, we show the following. 

\begin{comment}
In our setting, however, we prove a strong negative result, which shows that it is NP-hard
to achieve small additive violations even if the problem involves only three
matroids.
\footnote{We note that uniform matroids are an exception to this. If $M_i$ is a
  partition matroid, we can strengthen property~\ref{apx_indep} of
  Theorem~\ref{thm:mainThm} as follows: $R \cap N_i$ is ($q_i - 1$)-additively
  approximately independent in $M_i$.} 
We prove this via a reduction from the following problem: 

\begin{enumerate}[(Q)]
\item Given three matroids $M_0, M_1, M_2$ on a common ground set $N$, either find a basis
  of $M_0$ that is independent in $M_1$ and $M_2$, or certify that no such basis
  exists. \label{Q} 
\end{enumerate}
Problem~\ref{Q} is NP-hard, since there is a well-known reduction from (the decision
version of) the Hamiltonian-path problem to it.
\end{comment}

\begin{theorem} \label{theorem:additive_violation} \label{hardness}
Let $f(n) = O(n^{1-\varepsilon})$, where $\varepsilon > 0$ is a constant. 
Suppose we have a polytime algorithm $\A$ for \eqref{eq:baseQMatchoid} 
that returns a basis of $M_0$ that is $f(|N|)$-additively independent in $M_i$ for $i=1,2$. Then we
can find in polytime a basis of $M_0$ that is independent in $M_1,M_2$.
\end{theorem}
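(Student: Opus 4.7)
The plan is to use a replication argument that exploits the sub-linearity of $f$. Given an instance $(M_0, M_1, M_2)$ of \eqref{eq:baseQMatchoid} with $k=2$ on a ground set $N$ with $n = |N|$, I would build a blown-up instance on a ground set $N' = \bigsqcup_{j=1}^{t} N^{(j)}$ consisting of $t$ disjoint copies of $N$, where for each $i \in \{0, 1, 2\}$ the matroid $M_i'$ is the direct sum $\bigoplus_{j=1}^{t} M_i^{(j)}$ of $t$ copies of $M_i$ on the respective $N^{(j)}$; the weight vector is taken to be zero. Feeding this instance (of size $|N'| = tn$) to $\A$ produces, in time polynomial in $tn$, a basis $R$ of $M_0'$ that is $f(tn)$-additively independent in both $M_1'$ and $M_2'$.

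The structural fact that drives the argument is that, for a direct sum, $r_{M_i'}(R) = \sum_{j=1}^{t} r_{M_i}(R \cap N^{(j)})$, and $R$ is a basis of $M_0'$ if and only if $R \cap N^{(j)}$ is a basis of $M_0$ for every $j$. Hence the additive violation decomposes across copies as
\begin{equation*}
|R| - r_{M_i'}(R) \;=\; \sum_{j=1}^{t} \Bigl(|R \cap N^{(j)}| - r_{M_i}(R \cap N^{(j)})\Bigr) \;\le\; f(tn),
\end{equation*}
with each summand a nonnegative integer. Therefore at most $f(tn)$ indices $j$ can have a positive $M_1$-violation, and at most $f(tn)$ can have a positive $M_2$-violation, so at least $t - 2 f(tn)$ indices $j$ satisfy: $R \cap N^{(j)}$ is a basis of $M_0$ that is (genuinely, i.e., $0$-additively) independent in both $M_1$ and $M_2$. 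A straightforward polytime scan over the $t$ copies identifies such a $j$ and returns $R \cap N^{(j)}$ as the sought-after basis.

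All that remains is to calibrate $t$ so that (a) $t$ is polynomial in $n$ (so $\A$ runs in polytime in $n$), and (b) $t > 2 f(tn)$ (so a good copy is guaranteed to exist). Writing $f(m) \le c \, m^{1-\varepsilon}$ for some constant $c$, condition (b) becomes $t^{\varepsilon} > 2c \cdot n^{1-\varepsilon}$, which is satisfied by $t = \lceil C \cdot n^{1/\varepsilon} \rceil$ for any constant $C > (2c)^{1/\varepsilon}$. Since $\varepsilon > 0$ is fixed, this $t$ is polynomial in $n$, and both requirements hold simultaneously.

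I do not foresee a substantive technical obstacle: the proof is essentially a pigeonhole argument after inflating the ground set via disjoint copies. The only step requiring care is the choice of the replication factor, and the assumption $f(n) = O(n^{1-\varepsilon})$ with $\varepsilon > 0$ is precisely what allows $t$ to grow polynomially in $n$ while still dominating the total budget $2 f(tn)$ of permitted additive violation.
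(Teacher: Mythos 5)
Your proof is correct and follows essentially the same replication-plus-pigeonhole strategy as the paper: build the direct sum of $t$ copies, run $\A$, and argue that since the total additive violation is at most $2f(tn) < t$, some copy incurs zero violation in both $M_1$ and $M_2$. The only cosmetic difference is that you decompose the rank deficit $|R|-r_{M_i'}(R)$ across copies directly, while the paper equivalently works with the sets $Q_1,Q_2$ of removable elements; both yield the same pigeonhole count.
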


The problem of finding a basis of $M_0$ that is independent in $M_1,M_2$ is \nphard,
as shown by an easy reduction from the directed Hamiltonian path problem. 
%Thus, Theorem~\ref{hardness} shows that it is \nphard to obtain sublinear additive violation for 
%problem \eqref{eq:baseQMatchoid}.
Thus, Theorem~\ref{hardness} shows that it is {\nphard} to obtain an additive violation
for problem \eqref{eq:baseQMatchoid} that is substantially better than linear violation.
\nnote{Again, fix sublinear; made an attempt.}

\begin{comment}
Consider the following problem:
\begin{enumerate}[(Q$_f$)]
\item Given three matroids $M_0, M_1, M_2$ on a common ground set $N$, either find a basis of $M_0$ that is $f(|N|)$-additively approximately independent in $M_1$ and $M_2$, or certify that no basis of $M_0$ is independent in $M_1$ and $M_2$. \label{Qf}
\end{enumerate}
If \ref{Qf} admits an efficient algorithm, then so does \ref{Q}.
\end{comment}

\begin{proofof}{Theorem~\ref{hardness}}
\begin{comment}
The idea is quite simple. We simply create $t$ copies of $N$ and the
matroids $M_0,M_1,M_2$, and run $\A$ on this replicated instance. By choosing $t$ suitably
(and this will depend on $f$), we can then argue that the solution induced for some copy
is feasible for the original problem.
%Suppose that \ref{Qf} admits an efficient algorithm, and consider an instance $\mathcal{J}
%= (M_i = (N, \mathcal{I}_i))_{i \in \{0, 1, 2\}}$ of \ref{Q}. 
\end{comment}
Choose $t$ large enough so that $t > 2f(t|N|)$. Since 
$f(n) = O\bigl(n^{1-\varepsilon}\bigr)$, this is achieved by some $t=\poly(|N|)$.
%we can take $t = \text{poly}(|N|)$ (where the hidden exponent depends
%on $\varepsilon$). 
For each $i \in \{0, 1, 2\}$, let $M'_i$ be the direct sum of $t$ copies of $M_i$.
%\underbrace{M_i \oplus M_i \oplus \dots \oplus
%  M_i}_{t \text{ times}}$, where $\oplus$ denotes the direct-sum operator. 
Let $N'$ be the ground set of these matroids, which consists of $t$ disjoint copies of
$N$, which we label $N_1,\ldots,N_t$.
%We view $M'_0, M'_1, M'_2$ as matroids on a common ground set  Suppose 

Clearly, the instance $(M'_0, M'_1, M'_2)$ is feasible iff the original instance is
feasible. Suppose that running $\A$ on the replicated instance yields a basis $R'$ of
$M'_0$ that has the stated additive violation for the matroids $M'_1, M'_2$.
Hence, there are two sets $Q_1,Q_2\subseteq R'$ with $|Q_1|,|Q_2|\leq f(t|N|)$, such that $R'\setminus Q_i$ is independent in $M'_i$ for $i = 1, 2$. Hence, $R'\setminus (Q_1\cup Q_2)$ is independent in both $M'_1$ and $M'_2$. Because $|Q_1\cup Q_2| \leq 2f(t|N|) < t$, we have by the pigeonhole principle that there is one $j\in [t]$ such that $(Q_1\cup Q_2)\cap N_j=\emptyset$. This implies that $R = R'\cap N_j = (R'\setminus (Q_1\cup Q_2))\cap N_j$, when interpreted on the ground set $N$, is independent in both $M_1$ and $M_2$. Moreover, the elements of $R$, when interpreted on the ground set $N$, are a basis in $M_0$ because $R'$ is a basis in $M'_0$. Hence, $R$ is the desired basis without any violations.
\end{proofof}

We can mimic the above proof to show that one cannot achieve small additive violations to
the matroid constraints for another special case of~\eqref{eq:baseQMatchoid}, namely,  
%We also show impossibility of achieving small additive violations 
%the generalized matroidal degree-bounded spanning tree (
the problem \GMDBMST introduced in Section~\ref{apps}:
%\eqref{section:matroidal_MST}. This impossibility holds even if $\{M_S\}_{S \in
%\mathcal{S}}$ are restricted to be partition matroids. Since \GMDBMST{} can be cast a
%special case of problem
%Recall that 
we are given
$\bigl(G=(V,E),c\in\R^E,\{M_i=(\dt(S_i),\I_i)\}_{i\in[k]}\bigr)$, where the $S_i$s are
pairwise-disjoint node sets, and we seek a min-cost spanning tree $T$ such that
$T\cap\dt(S_i)\in\I_i$ form all $i \in [k]$. 
As above, we show that if we can find a spanning tree that has small additive violation
for the matroids $M_1,\ldots,M_k$, then we can find a feasible solution to \gmdst. The
latter is \nphard, even when the $M_i$s are uniform matroids, as this captures the
degree-bounded spanning tree, and hence the Hamiltonian path problem.

This provides an alternative proof of why one cannot achieve additive guarantees for
\eqref{eq:baseQMatchoid}. (Note, however, that the hardness results from Theorems~\ref{hardness} and
\ref{thm:hardness_matroidal_mst} are orthogonal.)

\begin{theorem} \label{thm:hardness_matroidal_mst}
Let $f(n) = O(n^{1-\varepsilon})$, where $\varepsilon > 0$ is a constant.  Suppose we have
a polytime algorithm for \gmdst that returns a spanning tree $T \subseteq E$ such that
$T \cap \delta(S_i)$ is $f(|E|)$-additively independent in $M_i$ for all $i\in[k]$. Then
we can find a feasible solution to \gmdst in polytime. 
\end{theorem}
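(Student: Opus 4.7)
The plan is to mimic the replication-and-pigeonhole argument used for Theorem~\ref{hardness}, adapted to the spanning-tree setting. I first reduce to the case where $V_0 := V \setminus \bigcup_{i\in[k]} S_i \neq \emptyset$: if this fails, attach a new dummy vertex $v^*$ to any $u \in V$ by a single edge $e^*$; this edge is forced into every spanning tree of the modified graph, and if $u \in S_{i_0}$, I replace $M_{i_0}$ by its direct sum with a one-element matroid making $e^*$ a coloop. Feasibility is preserved and now $v^* \in V_0$.

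Now pick $v^* \in V_0$ and build $G'$ by taking $t$ disjoint copies $G^{(1)},\ldots,G^{(t)}$ of $G$ and adding $t-1$ bridge edges between consecutive copies of $v^*$. Since $v^*\notin\bigcup_i S_i$, no bridge edge lies in any $\delta(S_i^{(j)})$; and since each bridge edge is a graph-theoretic bridge of $G'$, every spanning tree of $G'$ must contain all $t-1$ of them. This in turn forces the restriction of any spanning tree of $G'$ to $E(G^{(j)})$ to be a spanning tree of the corresponding copy. The crucial step is then to aggregate constraints: set $\tilde{S}_i := \bigcup_j S_i^{(j)}$ (pairwise disjoint over $i$) and $\tilde{M}_i := \bigoplus_j M_i^{(j)}$, a matroid on $\delta_{G'}(\tilde{S}_i) = \bigcup_j \delta_{G^{(j)}}(S_i^{(j)})$. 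This yields a bona fide \gmdst instance with just $k$ matroids rather than $kt$.

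Run the hypothesized algorithm on $(G', \mathbf{0}, \{\tilde{M}_i\})$ to obtain a spanning tree $T'$. For each $i$, there is a violation set $Q_i \subseteq T'\cap\delta_{G'}(\tilde{S}_i)$ with $|Q_i|\leq f(|E(G')|)$ whose removal yields a set independent in $\tilde{M}_i$; by the direct-sum structure, $T' \cap \delta_{G'}(S_i^{(j)})$ is independent in $M_i^{(j)}$ for any copy $j$ with $Q_i \cap \delta_{G'}(S_i^{(j)}) = \emptyset$. Choose $t = \poly(|E|)$ large enough so that $k\cdot f(|E(G')|) < t$; since $|E(G')| \le 2t|E|$ and $f(n) = O(n^{1-\varepsilon})$, such a $t$ of order $|E|^{(1-\varepsilon)/\varepsilon}$ suffices. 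Each edge of $\bigcup_i Q_i$ lies in exactly one copy, and these edges collectively touch at most $|\bigcup_i Q_i| \leq k\cdot f(|E(G')|) < t$ copies; thus some copy index $j^*$ is untouched. Then $T' \cap E(G^{(j^*)})$, re-identified with a subgraph of $G$, is a spanning tree satisfying $T \cap \delta(S_i) \in \I_i$ for every $i \in [k]$, i.e., a feasible solution to the original \gmdst instance. Since \gmdst is \nphard (it captures degree-bounded spanning tree, hence Hamiltonian path), this establishes the claim.

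The main obstacle, and the reason the most naive replication fails, is that treating the $kt$ matroid constraints separately would produce a total violation bound of $kt \cdot f$, which exceeds the budget of $t$ we can afford for the pigeonhole to find an unviolated copy. Aggregating via direct sums into $k$ combined matroids $\tilde{M}_i$ reduces this to $k\cdot f$, which is precisely what makes the argument go through while keeping $t$ polynomial in $|E|$.
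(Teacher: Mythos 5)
Your proof is correct and follows essentially the same replication-plus-pigeonhole strategy as the paper, including the crucial step of aggregating the $t$ copies of $M_i$ into a single direct-sum matroid so that only $k$ (not $kt$) violation budgets are incurred. The only cosmetic difference is how the copies are glued: you preprocess to obtain a vertex $v^*$ outside $\bigcup_i S_i$ and link consecutive copies of it by bridges, whereas the paper attaches a single new vertex $z$ to one chosen vertex in each copy and absorbs the resulting $z$-incident edges into the matroids via a free-matroid direct summand; both choices equally guarantee that a spanning tree of $G'$ restricts to a spanning tree in each copy.
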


\begin{proof} 
%We closely follow the proof of Theorem~\ref{hardness}. 
%omitting some details for
%brevity. Let $(G = (V, E), d)$ be an instance of \DBST. 
Let $t = \poly(|E|)$ be such that $t>kf\bigl(t(|E| + 1)\bigr)$. We construct a graph by
taking the union of $t$ copies of $G$. To connect the copies with each other, we utilize
an additional vertex $z$, connected by an edge to all the copies of an arbitrarily chosen
vertex of $G$. Let $G'=(V',E')$ denote the graph thus obtained, and note that
$|E'|=t(|E|+1)$. 
For each $i\in[k]$, let $S'_i$ be the union of all copies of $S_i$. The matroid $M'_i$ on 
$\dt_{G'}(S'_i)$ is the union of the direct sum of $t$ copies of $M_i$, and the free
matroid on the edges of $\dt_{G'}(S'_i)$ incident to $z$.

%Consider the instance $(G', \{M_{S_v}\}_{v \in V})$ of (Unweighted) \GMDBMST{}, and 
Note that the resulting \gmdst instance is feasible iff the original \gmdst instance is
feasible. Furthermore, if $T'$ is a spanning tree of $G'$, then $T'$ restricted to each
copy of $G$ yields a spanning tree of $G$. The choice of $t$ ensures that in some copy, the
resulting tree satisfies the matroid constraints for $M_1,\ldots,M_k$, and is therefore a
feasible solution to the original \gmdst instance.
%given the output of the algorithm on this instance of \GMDBMST{}, one of the copies of $G$ will
%contain a feasible solution to our instance of \DBST.
\qed
\end{proof}

\section{Omitted proofs} \label{append-proofs}
\begin{proofof}{Corollary~\ref{cor:mainInd}}
    Extend $N$ by adding a set $F$ of $r(N_0)$ additional elements with $0$ weight, where $r$ is the rank function of $M_0$.
    %Do not modify $M_i$ for $i \in [k]$, but consider the matroid 
We modify $M_0$ to  a matroid $\widehat{M}_0$ on the ground set $N_0 \cup F$, given by the 
rank function %$\hat{r}$ defined by
    $\widehat{r}(S)\coloneqq\min\{r(S\cap N_0) + |S\cap F|, r(N_0)\}$.
    That is, $\widehat{M}_0$ is the union of $M_0$ with a free matroid on $F$, but then
    truncated to have rank $r(N_0)$. 
    It is now easy to see that if $x\in\R^{N\cup F}$ lies in $P_{\widehat{\base}_0}$, then
    $x\vert_{N_0}\in P_{\I_0}$. Moreover, we can extend $x\in\R^N$ with $x\in P_{\I_0}$ to 
    $x'\in\R^{N\cup F}$ so that $x'\vert_{N_0\cup F}\in P_{\widehat{\base}_0}$ and
    $x'\vert_N=x$.
    %\cap\P_{\I_1}\cap\ldots\cap P_{\I_k}$.
    %by padding it with elements of $F$ to 
\begin{comment}
    It is easy to see that any $R \subseteq N$ is in $\I_0$ iff it can be written as $R = B \cap N$, where $B \in \hat{\base}_0$ is a basis of $\hat{M}_0$.
    Furthermore, 
    %$\bigcap_{i\in\{0,1,\ldots,k\}} P_{\I_i} \subseteq \R^{N}$ 
    $P_{\I_0} \cap P_{\I_1} \cap \cdots \cap P_{\I_k} \subseteq \R^{N}$ 
    is the image of $P_{\hat{\base}_0} \cap P_{\I_1} \cap \cdots \cap P_{\I_k} \subseteq
    \R^{\hat{N}}$ under the canonical projection to the coordinates of $N$.
\end{comment}
    The corollary thus follows by applying Theorem~\ref{thm:mainThm} to $\widehat{M}_0, M_1, \ldots, M_k$. 
\end{proofof}

%\section{Omitted proofs from Section~\ref{round}} \label{append-proofs}

\begin{proofof}{Lemma~\ref{lem:refinement}}
%Let $I_S\in \mathcal{I}_1$ be a basis of $M\vert_S$ that extends $I_1$. Because 
%$I_2\in\mathcal{I}_2$, we have $I_2\cup I_S\in \mathcal{I}$, and hence, 
%$I_2\cup I_1 \subseteq I_2\cup I_S \in \mathcal{I}$.
%
For any set $A\sse N$, we have 
$$
x(A)=x(A\cap S)+x(A\sm S)\leq r_1(A\cap S)+r_2(A\sm S)=r(A\cap S)+r(A\cup S)-r(S).
$$
Using submodularity of $r$, we have $r(A\cup S)-r(S)\leq r(A)-r(A\cap S)$, so 
$x(A)\leq r(A)$. This holds for every $A\sse N$, so $x\in P_{\I}$. 
\end{proofof}

\begin{proofof}{Lemma~\ref{lem:remFeasible}}
Let $N_1=S$ and $N_2=N\sm S$.
For $i\in \{1,2\}$, to show $x\vert_{N_i}\in P_{\mathcal{I}_i}$ we have to verify that
$x\vert_{N_i}$ fulfills all constraints of the matroid polytope $P_{\mathcal{I}_i} =
\{y\in \mathbb{R}_{\geq 0}^{N_i} : y(Q) \leq r_i(Q) \;\forall Q\subseteq N_i \}$. For
$i=1$ this immediately follows from the fact that $r_1$ is the restriction of $r$ to
subsets of $S$, %(see \eqref{eq:rankRefinements}), 
and $x\in P_{\mathcal{I}}$; for any $A\subseteq S$, we have 
\begin{equation*}
x\vert_{N_1}(A) = x(A) \leq r(A) = r_1(A)\espace,
\end{equation*}
where the inequality follows from $x\in P_{\mathcal{I}}$, and the second equation from~\eqref{eq:rankRefinements}.

Moreover, $x\vert_{N_2}\in P_{\mathcal{I}_2}$ holds because for any $B\subseteq N\setminus S$, we have
\begin{align*}
x(B) = x(B) + x(S) - r(S)
     = x(B\cup S) - r(S)
     \leq r(B\cup S) - r(S)
     = r_2(B)\espace,
\end{align*}
where the first equation is a consequence of $x(S) = r(S)$, the inequality is implied by
$x\in P_{\mathcal{I}}$, and the last equation holds due to~\eqref{eq:rankRefinements}.
\end{proofof}

\end{document}